\documentclass[final,5p,times,twocolumn,nopreprintline]{elsarticle}

\usepackage{amsmath,amssymb,amsthm,mathtools}
\usepackage{bm}
\usepackage{graphicx}
\usepackage{color}
\usepackage{url}
\usepackage{hyperref}
\usepackage{enumitem}
\usepackage{booktabs}
\usepackage{empheq}
\usepackage{placeins}

\biboptions{numbers,sort&compress}
\mathtoolsset{showonlyrefs}
\hypersetup{
    colorlinks=true,
    linkcolor=blue,
    citecolor=blue,
    urlcolor=blue
}

\theoremstyle{plain}
\newtheorem{theorem}{Theorem}[section]
\newtheorem{proposition}[theorem]{Proposition}

\newtheorem{corollary}[theorem]{Corollary}

\theoremstyle{definition}
\newtheorem{definition}[theorem]{Definition}

\theoremstyle{remark}
\newtheorem{remark}[theorem]{Remark}

\numberwithin{equation}{section}

\date{}
\begin{document}

\begin{frontmatter}

\title{
Bifurcation structure and mesa pattern formation in a one-component nonlocal adhesion model with population pressure and degenerate mobility
}

\author[aff1]{Shimpei Makida}
\ead{makida@math.ryukoku.ac.jp}

\author[aff1]{Hideki Murakawa}
\ead{murakawa@math.ryukoku.ac.jp}

\affiliation[aff1]{organization={Faculty of Advanced Science and Technology, Ryukoku University},
                   addressline={1-5 Yokotani, Seta Oe-cho},
                   city={Otsu},
                   postcode={520-2194},
                   state={Shiga},
                   country={Japan}}

\begin{abstract}
We analyze pattern formation from a homogeneous steady state in a one-component nonlocal adhesion model with population pressure and degenerate mobility. First, using linear stability analysis, we derive the instability threshold and a selection rule for the fastest-growing mode, and elucidate the mechanism by which the selected wavenumber shifts toward lower wavenumbers as the mean density increases. We then perform a weakly nonlinear analysis near the critical adhesion strength and derive an explicit expression for the Landau coefficient in the Stuart--Landau equation. This expression shows that the critical bifurcation is classified as supercritical or subcritical according to the mean density, the nonlinear exponent, and the second-harmonic response ratio of the kernel. Furthermore, we show that a large nonlinear exponent promotes a transition to subcriticality and confirm, through numerical bifurcation analysis and time-dependent simulations, a bifurcation structure with a fold point and the formation of mesa patterns. Finally, through the energy limit as \(m\to\infty\), we relate the observed mesa profiles to a capacity-constrained limiting structure.
\end{abstract}

\begin{keyword}
cell-cell adhesion \sep nonlocal advection \sep degenerate diffusion \sep volume filling \sep linear stability \sep Stuart--Landau equation \sep mesa pattern
\end{keyword}

\end{frontmatter}

% ============================================================

\section{Introduction}
\label{sec:introduction}

How cell populations aggregate, segregate, and form spatially organized structures is a fundamental question in developmental biology, tissue formation, wound healing, and tumor growth. During these processes, cells do not move purely at random; rather, they exhibit directed motion in response to interactions with neighboring cells and the extracellular environment. In particular, cell-cell adhesion is a fundamental mechanism that draws cells together and drives aggregation, boundary formation, alignment, and self-organization. For example, N-cadherin has been reported to play an essential role in neuronal self-organization in the Drosophila visual system, indicating that cell-cell adhesion is not restricted to microscopic cell-cell contacts but also contributes to macroscopic pattern formation at the tissue scale \cite{TrushLiuHanNakaiTakayamaMurakawaCarrilloTakechiHakedaSuzukiSuzukiSato2019}.

When collective motion driven by cell-cell adhesion is described within a continuum framework, it is natural to incorporate the nonlocal nature of cell sensing, whereby cells detect other cells within a finite sensing range and move in response to their spatial distribution.
From this perspective, beginning with the nonlocal integro-differential adhesion model introduced by Armstrong--Painter--Sherratt~\cite{ArmstrongPainterSherratt}, extensive studies have investigated the mathematical structure and pattern formation of cell adhesion models involving nonlocal advection \cite{ButtenschoenHillen2021,MurakawaTogashi2015,PainterHillenPotts2024}.

In real cell populations, however, density cannot increase without bound. Cell volume and excluded-volume effects impose a natural upper bound on the density. In models incorporating only a simple nonlocal attraction, the cell density may become excessively concentrated, potentially producing nonphysical singularities beyond the biologically admissible density range. In this study, we represent this high-density effect by the population pressure
$
    p(u)=u^m.
$
The pressure gradient produces local dispersal away from high-density regions, whereas the adhesion mobility \(1-u^m\) decreases near the capacity limit, thereby suppressing local advection driven by the nonlocal adhesion field. By incorporating these pressure effects and density saturation, the model can describe pattern formation driven by nonlocal adhesion while preventing excessive aggregation \cite{AnguigeSchmeiser2009,CarrilloMurakawaSatoTogashiTrush2019,CarrilloChenWangWangZhang2020}.

The one-component nonlocal adhesion model considered in this paper is given for the density \(u=u(x,t)\) by
\begin{equation}
    \frac{\partial  u}{\partial t}
    =
    \nabla\cdot(u\nabla u^m)
    -
    a\nabla\cdot
    \left(
        u(1-u^m)\nabla(K*u)
    \right)
    \label{eq:main_model}
\end{equation}
Here, \(m\ge1\) is an exponent characterizing the nonlinearity of the population pressure, \(a>0\) is the adhesion strength, and \(K\) is a nonlocal kernel representing cell-cell adhesive interactions within the sensing range. The first term represents local dispersal generated by the gradient of the population pressure \(p(u)=u^m\). This term can be written as
$\frac{m}{m+1}\Delta u^{m+1}$
and is therefore a porous-medium-type nonlinear degenerate diffusion term. The second term represents nonlocal advection induced by cell-cell adhesion. In particular, the factor \(1-u^m=1-p(u)\) in the adhesive flux becomes small as the population pressure approaches the capacity limit, thereby suppressing adhesion-driven advection.

Ishii--Murakawa--Tanaka \cite{IshiiMurakawaTanaka2026} proved that, when the initial data satisfy \(0\le u_0\le1\), a corresponding weak solution exists and the bound \(0\le u\le1\) is preserved throughout the evolution. Building on this well-posedness result, we focus on the destabilization of a homogeneous steady state, the bifurcation structure near the critical point, and mesa pattern formation in the large-amplitude regime.

The first objective of this study is to clarify the linear stability and mode selection of the homogeneous steady state. For a homogeneous state with mean density \(\bar u\in(0,1)\), we derive the dispersion relation and show that the instability threshold is determined by the adhesion strength, the mean density, and the Fourier response of the kernel. We also introduce an effective adhesion ratio that quantifies the strength of nonlocal adhesion relative to local diffusion, and explain the mechanism by which the wavenumber with the maximum growth rate shifts toward lower wavenumbers as the mean density increases. On a finite periodic domain, this shift toward lower wavenumbers manifests itself as a stepwise decrease in the selected Fourier mode number.

The second objective is to determine whether the critical bifurcation from the homogeneous state is supercritical or subcritical. On a one-dimensional periodic domain where a simple critical wavenumber \(k_c\) is selected, we derive an amplitude equation for the critical mode using a multiple-scale expansion. We refer to the cubic coefficient in the resulting Stuart--Landau equation as the Landau coefficient \(c_{\mathrm L}\), whose sign determines the type of the critical bifurcation. We express \(c_{\mathrm L}\) explicitly as a function of the mean density \(\bar u\), the nonlinear exponent \(m\), and the second-harmonic response ratio \(\kappa=\widehat K(2k_c)/\widehat K(k_c)\). This representation provides a unified framework for describing the effects of the kernel shape and the system size on the bifurcation type through the second-harmonic response ratio.

This analysis yields an important conclusion regarding the adhesion strength \(a\). The adhesion strength determines the critical value \(a_c\) at which the homogeneous state becomes linearly unstable. By contrast, whether the bifurcation arising near the critical point is supercritical or subcritical is determined by the sign of the Landau coefficient after \(a_c\) has been eliminated using the criticality condition. This sign depends primarily on \(\bar u\), \(m\), and \(\kappa\). Thus, the adhesion strength determines the onset of instability, whereas the bifurcation type itself is governed by the volume-filling nonlinearity and the harmonic response of the kernel.

The third objective is to explain mesa pattern formation for large \(m\) through the connection between the weakly nonlinear analysis and the energy limit as \(m\to\infty\). From the explicit expression for the Landau coefficient, we show that, when the nonlinear exponent \(m\) is sufficiently large, the critical bifurcation tends to become subcritical over a broad range of kernel responses. This implies that small-amplitude wave-like patterns are more likely to transition to large-amplitude aggregates than to saturate smoothly. Numerical simulations confirm that, as \(m\) increases, the density develops a flat region near the capacity limit \(u=1\) and approaches a mesa profile with sharp interfaces.

The mesa limit for nonlinear nonlocal aggregation--diffusion equations has been analyzed in detail by Carrillo--Gvalani \cite{CarrilloGvalani2021}. We apply their mesa-limit framework to the volume-filling entropy of the present model.
Specifically, we show that, as \(m\to\infty\), the local entropy term vanishes under the capacity constraint \(0\le u\le1\), and that the energy functionals converge uniformly and \(\Gamma\)-converge to a capacity-constrained nonlocal attraction energy.
We further show that, when the Fourier coefficients of the kernel are nonnegative, the limiting problem admits a characteristic-function-type minimizer. This result links the mesa profiles observed for large \(m\), consisting of saturated regions with \(u\simeq1\) and background regions with \(u\simeq0\), to the limiting energy structure as \(m\to\infty\).

The remainder of this paper is organized as follows. Section~\ref{sec:model} introduces the model and its basic structure. Section~\ref{sec:linear_stability} addresses the linear stability and mode selection of the homogeneous steady state. Section~\ref{sec:weakly_nonlinear} derives the Stuart--Landau equation through a weakly nonlinear analysis, provides an explicit expression for the Landau coefficient, and analyzes the transition to subcriticality induced by large \(m\). Section~\ref{sec:mesa_limit} examines the mesa limit as \(m\to\infty\) and the capacity-constrained energy structure, and confirms mesa pattern formation through numerical simulations. Finally, Section~\ref{sec:conclusion} summarizes the results and discusses directions for future research. The Fourier responses of the kernels, the numerical methods, the construction of the steady-state bifurcation diagrams, and details of some proofs are presented in the appendices.

\section{Model and basic structure}
\label{sec:model}

In this section, we consider the model \eqref{eq:main_model} for the cell density \(u=u(x,t)\) on the \(N\)-dimensional domain \(\Omega=[-L,L)^N\) under periodic boundary conditions. We summarize the kernels used in this study, known results on the existence and boundedness of solutions, and the formal energy structure of the model.

\subsection{Kernel}
\label{subsec:model_formulation}

The nonlocal interaction is given by convolution with a periodized kernel \(K\):
\begin{equation}
    (K*u)(x,t)
    =
    \int_\Omega K(x-y)u(y,t)\,dy
    \label{eq:convolution}
\end{equation}
To define the nonlocal interaction under periodic boundary conditions, we periodize a radially symmetric base kernel \(K_0(x)\), defined on the whole space \(\mathbb R^N\), with period \(2L\) (cf. \cite{IshiiMurakawaTanaka2026}). Specifically, we define
\begin{equation}
    K(x)
    =
    \sum_{\ell\in\mathbb Z^N}
    K_0(x-2L\ell),
    \qquad
    x\in\Omega.
    \label{eq:periodized_kernel}
\end{equation}
Then \(K\) is periodic on \(\Omega\), and the convolution \eqref{eq:convolution} is consistent with the periodic boundary conditions.

Moreover, for each admissible wavenumber \(k\in\mathcal K:=\frac{\pi}{L}\mathbb Z^N\), the Fourier coefficient of the periodized kernel is given by
\begin{equation}
    \widehat K(k)
    =
    \int_\Omega K(x)e^{-ik\cdot x}\,dx
    =
    \int_{\mathbb R^N}K_0(x)e^{-ik\cdot x}\,dx
    \label{eq:periodic_fourier_coeff}
\end{equation}
Thus, the discrete Fourier coefficients of the periodized kernel coincide with the full-space Fourier transform of the base kernel \(K_0\) evaluated at the admissible wavenumbers \(k\in\mathcal K\). This identity allows the effect of the periodized kernel to be treated algebraically in wavenumber space in both the linear stability analysis and the weakly nonlinear analysis.

In the examples considered in this paper, we mainly use a tent kernel. We nondimensionalize the system using the sensing radius $R$ as the reference length and set \(R=1\). The domain size $L$ then represents the size of the domain relative to the sensing radius. The base kernel associated with the tent kernel is radially symmetric and, with \(r=|x|\), is defined by
\begin{equation}
    K_{0,\mathrm{tent}}(x)
    =
    K_{0,\mathrm{tent}}(r)
    =
    c_N(1-r)_+
    \label{eq:tent_kernel}
\end{equation}
Here, \((s)_+=\max\{s,0\}\), and the constant \(c_N\) is chosen so that
\[
    \int_{\mathbb R^N}K_{0,\mathrm{tent}}(x)\,dx=1
\]
For example, when \(N=1\),
$
    c_1=1,
$
whereas when \(N=2\),
$
    c_2={3}/{\pi}
$
.
Using the adhesion-strength function
\[
    \omega_{\mathrm{tent}}(r)
    =
    -\frac{dK_{0,\mathrm{tent}}}{dr}(r)
    =
    c_N\mathbf 1_{(0,1)}(r)
\]
the nonlocal adhesion field can be written as
\[
    \nabla(K*u)(x)
    =
    \int_0^1\int_{\mathbb S^{N-1}}
        u(x+r\eta)\,
        \omega_{\mathrm{tent}}(r)\,
        r^{N-1}\eta
    \,d\eta\,dr
\]
This representation gives the tent kernel a simple mechanical interpretation: cells attract all other cells within the sensing radius \(R=1\) with the same strength, independently of their distance.

In the main text, we use this kernel as the primary example in both the analysis and the numerical simulations. Nevertheless, when the kernel shape is changed, the bifurcation structure can still be classified within the same framework through the second-harmonic response ratio introduced below. Other representative purely attractive kernels and their Fourier responses are summarized in \ref{app:kernels}.

\subsection{Known results on existence and boundedness}
\label{subsec:known_wellposedness}

Basic mathematical properties of weak solutions to the model \eqref{eq:main_model} have already been established by Ishii--Murakawa--Tanaka \cite{IshiiMurakawaTanaka2026}. In particular, for suitable initial data satisfying
\begin{equation}
    0\le u_0(x)\le 1
    \qquad
    \text{a.e. in } \Omega
    \label{eq:initial_bound}
\end{equation}
a corresponding weak solution exists, and the bound
\begin{equation}
    0\le u(x,t)\le 1
    \qquad
    \text{a.e. in } \Omega\times(0,T)
    \label{eq:solution_bound}
\end{equation}
is preserved throughout the evolution. This boundedness is essential to the physical interpretation of the model. The upper bound \(u=1\) represents the capacity limit at which cells completely occupy the available space, and states exceeding this limit have no biological meaning. By contrast, \(u=0\) represents a vacuum region in which no cells are present. Thus, preservation of the solution within the interval \([0,1]\) is a fundamental property ensuring the physical consistency of the model.

\subsection{Gradient-flow formulation and energy dissipation structure}
\label{subsec:energy_structure}

We next describe the energy structure of the model \eqref{eq:main_model}. We first define the entropy density \(F\) by
\begin{equation}
    F(u)
    =
    -\int_0^u \log(1-s^m)\,ds,
    \qquad
    0\le u\le 1
    \label{eq:F_def}
\end{equation}
where the value at $u=1$ is defined as an improper integral and is finite. Then
\begin{equation}
    F'(u)
    =
    -\log(1-u^m)
    \label{eq:F_prime}
\end{equation}
and \(F'(u)\) diverges as \(u\to1\). This singularity acts as an entropic repulsion as the density approaches the capacity limit.

We define the total energy functional by
\begin{equation}
    E[u]
    =
    \int_\Omega F(u)\,dx
    -
    \frac{a}{2}
    \int_\Omega u(K*u)\,dx
    \label{eq:energy}
\end{equation}
The first term is the local energy reflecting the population pressure and the capacity constraint, whereas the second term is the nonlocal attraction energy arising from cell-cell adhesion. Since the kernel \(K\) is obtained by periodizing a radially symmetric base kernel, it satisfies
$
K(x)=K(-x)
$. Under this symmetry, the formal first variation of the energy \eqref{eq:energy} is given by
\begin{equation}
    \mu
    :=
    \frac{\delta E}{\delta u}
    =
    F'(u)-aK*u
    =
    -\log(1-u^m)-aK*u
    \label{eq:chemical_potential}
\end{equation}

Using this chemical potential, the model \eqref{eq:main_model} can be rewritten in the following gradient-flow form:
\begin{equation}
    \frac{\partial u}{\partial t}
    =
    \nabla\cdot\left(g(u)\nabla\mu\right).
    \label{eq:gradient_flow_form}
\end{equation}

Here, $g(u):=u(1-u^m)$. Formally, using \eqref{eq:chemical_potential}, \eqref{eq:gradient_flow_form}, and the boundedness \eqref{eq:solution_bound}, and integrating by parts under periodic boundary conditions, we obtain the energy dissipation law
\[
\begin{aligned}
    \frac{d}{dt}E[u(t)]
    &=
    \int_\Omega
    \frac{\delta E}{\delta u}
    \frac{\partial u}{\partial t}\,dx \\
    &=
    \int_\Omega
    \mu\,\nabla\cdot(g(u)\nabla\mu)\,dx \\
    &=
    -\int_\Omega
    g(u)|\nabla\mu|^2\,dx \le 0.
\end{aligned}
\]
This formal energy structure plays an important role in the interpretation of the mesa limit for large \(m\) in \ref{app:mesa_limit_proofs}.
\section{Linear stability analysis and mode selection}
\label{sec:linear_stability}

In this section, we perform a linear stability analysis about a spatially homogeneous steady state and investigate the onset condition for pattern formation and the spatial modes selected during the initial stage. The nonlocal adhesion term tends to amplify density heterogeneities, whereas local degenerate diffusion tends to homogenize the density. The stability of the homogeneous state is therefore determined by the competition between these two effects.

First, for any constant \(\bar u\in[0,1]\), \(u(x,t)\equiv \bar u\) is a steady-state solution of \eqref{eq:main_model}. Indeed, if \(\bar u\) is spatially constant, then \(\nabla \bar u^m=0\) and \(\nabla(K*\bar u)=0\), so both the diffusive and adhesive fluxes vanish. In what follows, we assume 
$0<\bar u<1$ to consider a nontrivial linearization.

\subsection{Dispersion relation and instability threshold}
\label{subsec:dispersion_relation}

We introduce a small perturbation \(u(x,t)=\bar u+\varepsilon\widetilde u(x,t)\), \(0<\varepsilon\ll1\), about the homogeneous steady state \(u=\bar u\) and retain the terms of first order in \(\varepsilon\). For the diffusion term, we have
\[
    \nabla\cdot(u\nabla u^m)
    =
    \nabla\cdot\left(m u^m\nabla u\right)
\]
and hence the linearization coefficient is \(D(\bar u)=m\bar u^m\). For the adhesion term, the coefficient \(g(\bar u)=\bar u(1-\bar u^m)\) appears. Since \(\nabla(K*\bar u)=0\) for the constant state, the first-order term involving \(g'(\bar u)\) vanishes. The linearized equation is therefore
\begin{equation}
    \frac{\partial \widetilde u}{\partial t}
    =
    D(\bar u)\Delta\widetilde u
    -
    a g(\bar u)\Delta(K*\widetilde u)
    \label{eq:linearized_equation}
\end{equation}

We substitute the Fourier mode
\[
    \widetilde u(x,t)
    =
    \widetilde U e^{\lambda t+ik\cdot x},
    \qquad
    k\in\mathcal K
\]
into the linearized equation. By the properties of periodic convolution,
\[
    K*\widetilde u
    =
    \widehat K(k)\widetilde u,
    \qquad
    \Delta\widetilde u
    =
    -|k|^2\widetilde u
\]
and hence the growth rate \(\lambda(k)\) is given by
\begin{equation}
    \lambda(k)
    =
    |k|^2
    \left[
        a\bar u(1-\bar u^m)\widehat K(k)
        -
        m\bar u^m
    \right]
    \label{eq:dispersion_relation}
\end{equation}

The dispersion relation \eqref{eq:dispersion_relation} can be expressed more transparently in terms of the strength of nonlocal adhesion relative to local diffusion. We therefore define the effective adhesion ratio by
\begin{equation}
    \Gamma(\bar u)
    :=
    \frac{a\bar u(1-\bar u^m)}{m\bar u^m}
    =
    \frac{a(1-\bar u^m)}{m\bar u^{m-1}}
    \label{eq:effective_adhesion_ratio}
\end{equation}
Then \eqref{eq:dispersion_relation} can be written as
\begin{equation}
    \lambda(k)
    =
    m\bar u^m |k|^2
    \left[
        \Gamma(\bar u)\widehat K(k)-1
    \right]
    \label{eq:dispersion_gamma}
\end{equation}
Consequently, the condition for a nonzero wavenumber \(k\) to be linearly unstable is
\begin{equation}
\label{eq:cond_linearinstability}
        \Gamma(\bar u)\widehat K(k)>1
\end{equation}

Because the total mass is conserved, perturbations about a fixed mean density \(\bar u\) are constrained to have zero mean. The \(k=0\) component is therefore excluded from the Fourier expansion, and in what follows only the modes with \(k\ne0\) are regarded as pattern-forming modes. Thus, the homogeneous state is linearly unstable if and only if there exists a nonzero wavenumber \(k\in\mathcal K\setminus\{0\}\) such that
\[
    \lambda(k)>0
\]

\begin{theorem}[Linear instability threshold]
\label{thm:linear_threshold}
Consider the homogeneous steady state \(u=\bar u\), \(0<\bar u<1\). For the nonzero admissible wavenumbers, define
\[
    \widehat K_{\max}
    :=
    \max_{k\in\mathcal K\setminus\{0\}}\widehat K(k)
\]
and assume that \(\widehat K_{\max}>0\). Then the homogeneous steady state \(u=\bar u\) is linearly unstable if
\begin{equation}
    a>a_c(\bar u)
    :=
    \frac{m\bar u^{m-1}}
    {(1-\bar u^m)\widehat K_{\max}}
    \label{eq:critical_adhesion}
\end{equation}
and is linearly stable if
\[
    0<a<a_c(\bar u).
\]
\end{theorem}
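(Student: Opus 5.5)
The plan is to read the threshold directly off the dispersion relation \eqref{eq:dispersion_gamma}. Since \(0<\bar u<1\), the prefactor \(m\bar u^m|k|^2\) is strictly positive for every \(k\in\mathcal K\setminus\{0\}\). Hence \(\operatorname{sign}\lambda(k)=\operatorname{sign}(\Gamma(\bar u)\widehat K(k)-1)\), and by \eqref{eq:cond_linearinstability} the only question is whether some nonzero admissible \(k\) has \(\Gamma(\bar u)\widehat K(k)>1\).

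\textbf{Preliminary: \(\widehat K_{\max}\) is attained.} I will first check that the maximum in the definition of \(\widehat K_{\max}\) exists, since \(\mathcal K\setminus\{0\}\) is infinite. Because \(K_0\in L^1(\mathbb R^N)\), the Riemann--Lebesgue lemma together with \eqref{eq:periodic_fourier_coeff} gives \(\widehat K(k)\to0\) as \(|k|\to\infty\). Under the hypothesis \(\widehat K_{\max}>0\), understood as the supremum being positive, only finitely many admissible \(k\) satisfy \(\widehat K(k)\ge \tfrac12\sup\widehat K\). The supremum is therefore a maximum, attained at some \(k_*\ne0\). This is the one step that needs more than algebra, and Riemann--Lebesgue handles it. Note also that \(\widehat K\) is real, since \(K\) is even.

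\textbf{Instability for \(a>a_c(\bar u)\).} By \eqref{eq:effective_adhesion_ratio}, \(\Gamma(\bar u)=a(1-\bar u^m)/(m\bar u^{m-1})\) is strictly increasing and linear in \(a\). The condition \(a>a_c(\bar u)\) is therefore equivalent to \(\Gamma(\bar u)\widehat K_{\max}>1\), using \(1-\bar u^m>0\) and \(\widehat K_{\max}>0\). At \(k=k_*\) this gives \(\lambda(k_*)>0\), so the Fourier mode \(e^{\lambda(k_*)t+ik_*\cdot x}\) (or its real part, pairing \(k_*\) with \(-k_*\)) grows exponentially. This is linear instability.

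\textbf{Stability for \(0<a<a_c(\bar u)\).} Now \(\Gamma(\bar u)\widehat K_{\max}<1\), with \(\Gamma(\bar u)>0\). For each \(k\ne0\) I split into two cases. If \(\widehat K(k)>0\), then \(\Gamma\widehat K(k)\le\Gamma\widehat K_{\max}<1\). If \(\widehat K(k)\le0\), then \(\Gamma\widehat K(k)\le0<1\). In both cases \(\lambda(k)<0\). Since the \(k=0\) mode is excluded by mass conservation, every zero-mean perturbation decays. I expect to add a uniform bound as a remark: because \(\widehat K(k)\le\widehat K_{\max}\), we get \(\lambda(k)\le -m\bar u^m(\pi/L)^2(1-\Gamma\widehat K_{\max})<0\) for all \(k\ne0\). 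By Parseval this yields exponential decay of the linearized solution in \(L^2\), which justifies calling the state linearly stable.
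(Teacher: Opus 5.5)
Your proposal is correct and follows essentially the same route as the paper's proof: both read the threshold off the sign of $a\bar u(1-\bar u^m)\widehat K(k)-m\bar u^m$ in the dispersion relation, so that the first mode to destabilize is a maximizer of $\widehat K$ over $\mathcal K\setminus\{0\}$. You also supply three details the paper leaves implicit, and all three are valid: attainment of the maximum via Riemann--Lebesgue under the tacit assumption $K_0\in L^1(\mathbb R^N)$ (the paper presupposes it by writing $\max$), explicit handling of modes with $\widehat K(k)\le 0$, and the uniform spectral gap $\lambda(k)\le -m\bar u^m(\pi/L)^2\bigl(1-\Gamma(\bar u)\widehat K_{\max}\bigr)$.
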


\begin{figure*}[tbp]
    \centering
\includegraphics[width=0.75\linewidth]{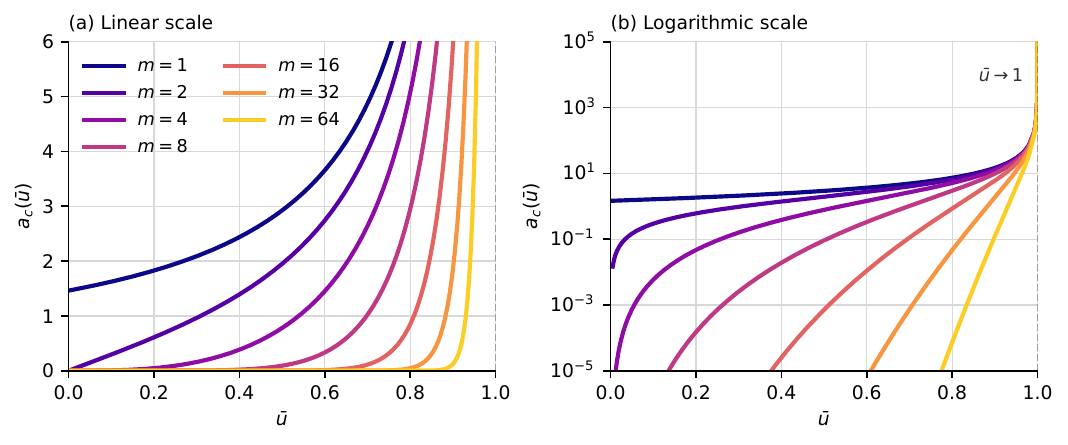}
\caption{
Dependence of the critical adhesion strength $a_c(\bar u)$ on the mean density $\bar u$ and the nonlinear exponent $m$.
The tent kernel in \eqref{eq:tent_kernel} is used, with
$L=1.5$ and $\Omega=[-L,L)$.
For the nonzero admissible wavenumbers,
$\widehat K_{\max}=\max_{k\in(\pi/L)\mathbb{Z}\setminus\{0\}}\widehat K(k)
=27/(4\pi^2)$, and
$a_c(\bar u)=m\bar u^{m-1}/\{(1-\bar u^m)\widehat K_{\max}\}$.
Panel (a) shows the dependence on $m$ at low to moderate mean densities on a linear scale, whereas panel (b) shows the divergence of $a_c(\bar u)$ as $\bar u\to1$ on a logarithmic scale.
}
    \label{fig:threshold_ac}
\end{figure*}

\begin{proof}
By the dispersion relation \eqref{eq:dispersion_relation}, the condition for a nonzero wavenumber \(k\) to become unstable is
\[
    a\bar u(1-\bar u^m)\widehat K(k)
    -
    m\bar u^m
    >0
\]
When \(\widehat K(k)>0\), this condition is equivalent to
\[
    a>
    \frac{m\bar u^{m-1}}
    {(1-\bar u^m)\widehat K(k)}
\]
Thus, the mode that first becomes unstable is the one that maximizes \(\widehat K(k)\) among the nonzero admissible wavenumbers, and the critical value is given by \eqref{eq:critical_adhesion}. Conversely, if \(a<a_c(\bar u)\), then \(\lambda(k)<0\) for every nonzero wavenumber, and the homogeneous state is therefore linearly stable.
\end{proof}

Equation \eqref{eq:critical_adhesion} demonstrates the stabilization caused by enhanced local dispersal and reduced adhesion mobility as the mean density increases. Indeed,
\[
    \bar u\to1
    \qquad\Longrightarrow\qquad
    a_c(\bar u)\to\infty
\]
This occurs because, as \(\bar u\to1\), the local diffusion coefficient \(m\bar u^m\) remains positive, whereas the adhesive-flux coefficient \(\bar u(1-\bar u^m)\) vanishes. Consequently, the homogeneous state is linearly stabilized.

\subsection{Effective adhesion ratio and density dependence of the fastest-growing mode}
\label{subsec:effective_attraction}

In this subsection, we investigate how the nonzero wavenumber with the largest growth rate changes with the mean density \(\bar u\) within the unstable regime. From the dispersion relation \eqref{eq:dispersion_gamma}, mode selection is determined by comparing
\[
    f_\Gamma(k)=|k|^2
    \left[
        \Gamma(\bar u)\widehat K(k)-1
    \right]
\]
over the nonzero admissible wavenumbers \(k\in\mathcal K\setminus\{0\}\).

We first examine the density dependence of the effective adhesion ratio \(\Gamma(\bar u)\). For fixed \(a>0\) and \(m\ge1\), \(\Gamma(\bar u)\) is strictly decreasing on \(0<\bar u<1\). Indeed, as the mean density increases, the volume-filling factor \(1-\bar u^m\) decreases, weakening the effectiveness of adhesion-driven advection. At the same time, the diffusion coefficient \(m\bar u^m\) increases, thereby strengthening the influence of diffusion relative to adhesion.

We next consider the growth rate \(f_\Gamma(q)\), treating the wavenumber as a continuous variable \(q\). Suppose that, over a continuous wavenumber interval
\(I=(\underline q,\overline q)\subset(0,\infty)\),
\(f_\Gamma\) attains its maximum in the interior of \(I\). At the maximizer \(q=q_*=q_*(\bar u)\), we have \(f_\Gamma'(q_*)=0\).
Direct calculation gives
\[
    f_\Gamma'(q)
    =
    q
    \left[
        \Gamma(\bar u)
        \left\{
            2\widehat K(q)+q\widehat K'(q)
        \right\}
        -2
    \right].
\]
Thus, defining
\begin{equation}
    H(q):=2\widehat K(q)+q\widehat K'(q)
    \label{def:H}
\end{equation}
an interior maximizer with \(q>0\) satisfies
\begin{equation}
    H(q)=\frac{2}{\Gamma(\bar u)}.
    \label{eq:H_peak_condition}
\end{equation}

The Fourier response of the tent kernel \eqref{eq:tent_kernel} is
$    \widehat K(q)
    =
    \frac{2(1-\cos(qR))}{(qR)^2}
$, and
$    H(q)
    =
    \frac{2\sin(qR)}{qR}
$.
Therefore, at least for \(0<qR<\pi\), \(H(q)\) is positive and strictly decreasing. As the mean density \(\bar u\) increases, \(\Gamma(\bar u)\) decreases and hence \(2/\Gamma(\bar u)\) increases. Consequently, the value of \(q\) satisfying the peak condition \eqref{eq:H_peak_condition} shifts toward lower wavenumbers. Thus, as the mean density increases, the fastest-growing mode shifts toward longer wavelengths.

The same mechanism applies to the other representative kernels considered in \ref{app:kernels}, at least for their dominant unstable modes at low wavenumbers.
This argument holds under the following sufficient conditions.

\begin{proposition}[Density-induced shift of the fastest-growing wavenumber]
\label{prop:density_longwave_shift}
Suppose that \(H(q)\) is positive and strictly decreasing on a continuous wavenumber interval \(I=(\underline q,\overline q)\subset(0,\infty)\).
Suppose further that, over the range of \(\bar u\) under consideration, \(f_\Gamma(q)\) has a unique maximum at \(q=q_*(\bar u)\in I\), characterized by \eqref{eq:H_peak_condition}. Then, for fixed adhesion strength \(a\), increasing the mean density \(\bar u\) decreases both the effective adhesion ratio \(\Gamma(\bar u)\) and the continuous wavenumber \(q_*(\bar u)\) associated with the maximum growth rate.
\end{proposition}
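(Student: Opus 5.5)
The plan has two steps: first show that \(\Gamma\) is strictly decreasing in \(\bar u\), then transfer this to \(q_*\) through the peak condition \eqref{eq:H_peak_condition}, using that \(H\) is invertible on \(I\).

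\textbf{Monotonicity of \(\Gamma\).} Fix \(a>0\) and \(m\ge1\), and write
\[
\Gamma(\bar u)=\frac{a}{m}\bigl(\bar u^{1-m}-\bar u\bigr),
\]
which is valid for \(0<\bar u<1\). Differentiating gives
\[
\Gamma'(\bar u)=\frac{a}{m}\bigl((1-m)\bar u^{-m}-1\bigr).
\]
For \(m\ge1\) the first term is nonpositive and the second equals \(-1\), so \(\Gamma'(\bar u)\le -a/m<0\). Hence \(\Gamma\) is strictly decreasing and positive on \((0,1)\). Consequently \(2/\Gamma(\bar u)\) is strictly increasing in \(\bar u\).

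\textbf{Transfer to \(q_*\).} By hypothesis, \(H\) is positive, continuous and strictly decreasing on \(I\). It is therefore a bijection from \(I\) onto its image \(H(I)\), with a strictly decreasing inverse \(H^{-1}\). For each \(\bar u\) in the range considered, the unique maximizer \(q_*(\bar u)\in I\) satisfies \eqref{eq:H_peak_condition}. In particular \(2/\Gamma(\bar u)\in H(I)\), and because \(H\) is injective on \(I\) this condition pins down the maximizer:
\[
q_*(\bar u)=H^{-1}\bigl(2/\Gamma(\bar u)\bigr).
\]
Now take \(\bar u_1<\bar u_2\). Then \(2/\Gamma(\bar u_1)<2/\Gamma(\bar u_2)\), and applying the strictly decreasing map \(H^{-1}\) gives \(q_*(\bar u_1)>q_*(\bar u_2)\). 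An equivalent direct argument: if \(q_*(\bar u_1)\le q_*(\bar u_2)\), monotonicity of \(H\) would give \(2/\Gamma(\bar u_1)\ge 2/\Gamma(\bar u_2)\), a contradiction.

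\textbf{Main obstacle.} The only delicate point is the logical role of the characterization. The interior maximizer must satisfy \eqref{eq:H_peak_condition}; this is the first-order condition \(f_\Gamma'(q_*)=0\) divided by \(q_*>0\), and it requires \(\widehat K\) to be differentiable on \(I\), which is implicit in the definition \eqref{def:H}. Injectivity of \(H\) then identifies \(q_*\) uniquely, so no second-order or implicit-function argument is needed. A quantitative version could be obtained by differentiating the peak condition in \(\bar u\), provided \(H'(q_*)\neq0\), but strict monotonicity suffices without differentiability of \(q_*\).
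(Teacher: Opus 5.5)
Your proposal is correct and takes the same route as the paper. The paper argues that \(\Gamma(\bar u)\) is strictly decreasing because \(1-\bar u^m\) decreases while \(m\bar u^{m-1}\) does not, so \(2/\Gamma(\bar u)\) increases, and strict monotonicity of \(H\) then forces the solution \(q_*\) of \(H(q_*)=2/\Gamma(\bar u)\) to decrease. Your explicit derivative \(\Gamma'(\bar u)=\tfrac{a}{m}\bigl((1-m)\bar u^{-m}-1\bigr)\le -a/m\) and the inverse-function phrasing make these two steps precise; only injectivity of \(H\) is used, not the continuity you mention, which is not among the hypotheses.
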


\if0
\begin{proof}
An interior maximizer satisfies \(H(q_*)=2/\\Gamma(\\bar u)\\). Since \\(\\Gamma(\\bar u)\\) is strictly decreasing on \\(0<\\bar u<1\\), increasing \\(\\bar u\\) increases \\(2/\\Gamma(\\bar u)\\). By assumption, \(H\) is strictly decreasing. Therefore, to maintain the equality \\(H(q\_*)=2/\Gamma(\bar u)\), \(q_*\) must decrease.
\end{proof}
\fi 

This proposition describes the shift in the peak position when the wavenumber is treated as a continuous variable.
On an actual periodic domain, however, the admissible wavenumbers are restricted to \(k_n=n\pi/L\), and therefore the continuous maximizer does not directly determine the selected mode. The shift of the fastest-growing mode toward lower wavenumbers as the mean density increases is observed as a decrease in the selected mode number \(n\).
The actual mode selection is determined by comparing \(\lambda(k_n)\) over the discrete set of admissible wavenumbers. Thus, when the continuous maximizer lies between two admissible wavenumbers, the neighboring discrete mode with the larger growth rate is selected.
Moreover, if two adjacent modes satisfy
$
    \lambda(k_n)=\lambda(k_{n+1})
$
and both attain the maximum growth rate, the two modes are simultaneously the fastest-growing modes. Such parameter values constitute switching points in mode selection and correspond to boundaries between regions in the mode-selection phase diagram presented below.

Fig.~\ref{fig:mode_selection_density} illustrates density-dependent mode selection for the tent kernel~\eqref{eq:tent_kernel} with \(L=5\) fixed. Panel (a) shows the linear growth rates obtained by varying the mean density \(\bar u\) while fixing the adhesion strength at \(a=4\). At low mean densities, relatively high-wavenumber modes have the largest growth rates, whereas the peak of the growth rate shifts toward lower wavenumbers as \(\bar u\) increases. Panel (b) presents the mode-selection phase diagram in the \((a,\bar u)\) plane, showing that, for fixed \(a\), the selected mode decreases stepwise as the mean density \(\bar u\) increases. The diagram also shows the stepwise changes in the selected mode when the adhesion strength \(a\) is varied at fixed \(\bar u\). Panels (c)--(f) show time-dependent simulations of the nonlinear dynamics at \(a=4\) for the representative mean densities indicated in panels (a) and (b).
Each panel displays three profiles: the initial profile, the profile at the time when the linearly unstable mode has grown and become clearly visible, and the profile after a sufficiently long time, when a numerical steady state has been reached. The unstable mode predicted by the linear theory is selected and grows during the initial stage, after which it develops into a nonlinear aggregation pattern. The numerical method used in this paper is described in \ref{app:numerical_scheme}.

It should be noted that the mode predicted by the linear analysis determines only the characteristic spatial scale during the initial stage of the evolution and does not necessarily determine the number of aggregates in the final steady state. In particular, when aggregation proceeds slowly or when the tails of the resulting aggregates lie within the sensing range of their mutual interaction, that is, within the support of the nonlocal kernel, neighboring aggregates may attract one another, leading to merging or coarsening.

\begin{figure*}[t]
    \centering
    \includegraphics[width=\linewidth]{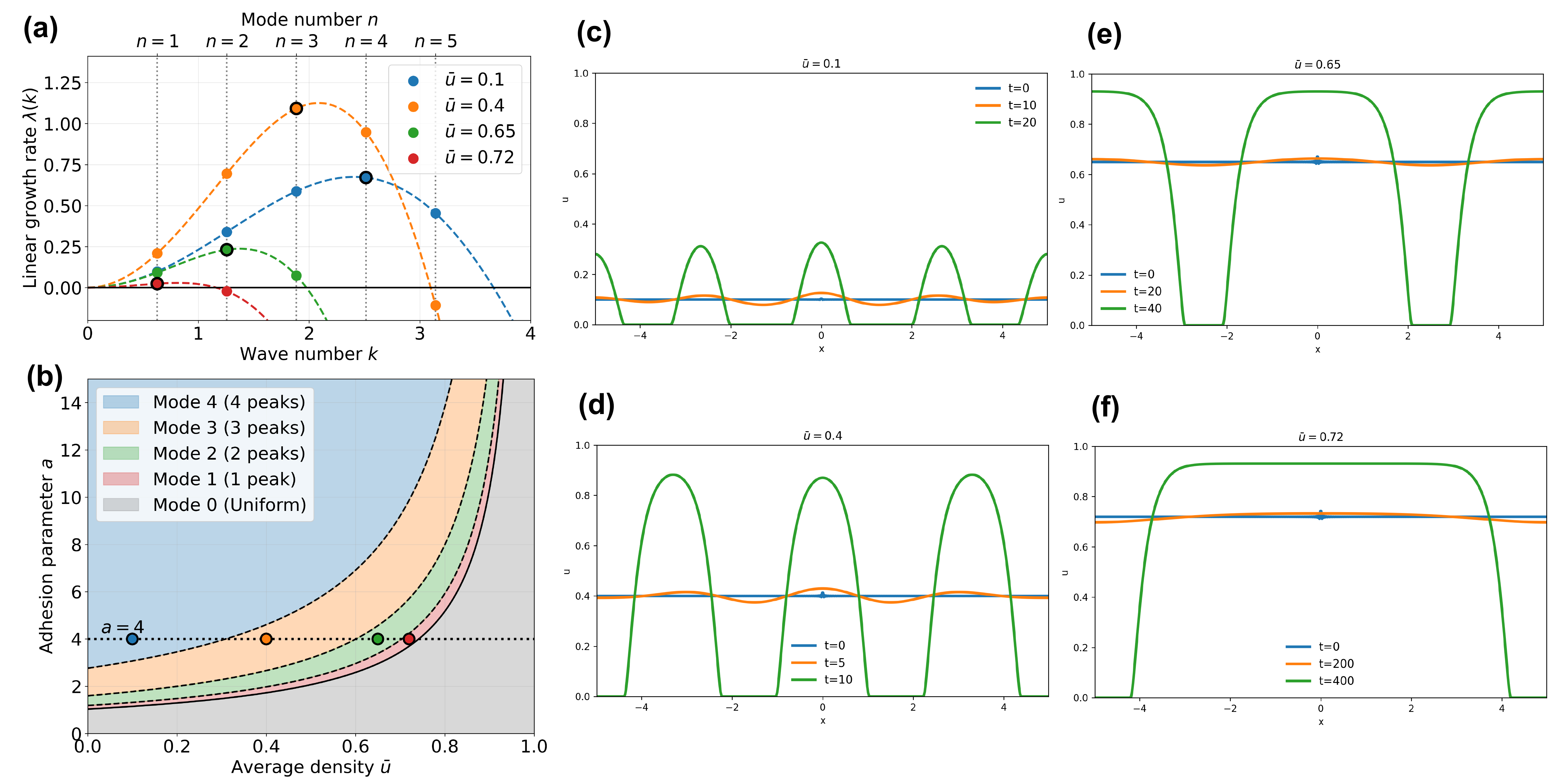}
    \caption{
    Shift of the fastest-growing mode toward lower wavenumbers with increasing mean density for the tent kernel~\eqref{eq:tent_kernel}, with \(L=5\) fixed.
    (a) Linear growth rate \(\lambda(k)\) for different mean densities \(\bar u\), with the adhesion strength fixed at \(a=4\).
    The dotted curves show the growth rate when \(k\) is treated as a continuous variable, and the circles indicate the admissible modes \(k_n=n\pi/L\) on the periodic domain. The black circles indicate the fastest-growing modes.
    As \(\bar u\) increases, the mode with the maximum growth rate shifts toward lower wavenumbers.
    (b) Linear mode-selection phase diagram in the \((a,\bar u)\) plane.
    The horizontal line indicates \(a=4\), and the black circles indicate the mean densities used in panels (a) and (c)--(f).
    (c)--(f) Time evolution of the numerical solutions for \(\bar u=0.1,0.4,0.65,0.72\), respectively.
    }
    \label{fig:mode_selection_density}
\end{figure*}

\section{Weakly nonlinear analysis and bifurcation classification}
\label{sec:weakly_nonlinear}

In the preceding section, we derived the condition for linear instability of the homogeneous steady state and the selection rule for the modes that grow during the initial stage. Linear analysis, however, provides only the initial growth rates of perturbations and cannot determine whether the resulting pattern saturates smoothly at small amplitude or undergoes an abrupt transition to large amplitude. To determine the bifurcation type, we perform a weakly nonlinear analysis near the critical adhesion strength and derive an amplitude equation for the critical mode.

In what follows, we consider a one-dimensional periodic domain and assume that, as the adhesion strength \(a\) is increased, the only critical wavenumbers at which the homogeneous state first becomes unstable are \(k=\pm k_c\).
More precisely, we assume that, at the critical value \(a=a_c\),
\[
    \lambda(\pm k_c)=0,
    \qquad
    \lambda(k)<0
    \quad
    (k\in\mathcal K\setminus\{0,\pm k_c\})
\]
holds. Under this assumption, the bifurcation near the critical point is described by a single complex amplitude. On multidimensional periodic domains, several wavenumbers of the same magnitude may become critical simultaneously, in which case multimode amplitude equations are required. In this paper, we focus primarily on one spatial dimension and analyze the case of a simple critical mode.

\subsection{Derivation of the Stuart--Landau equation}
\label{subsec:SL_reduction}

We expand the adhesion strength near its critical value as
\begin{equation}
    a=a_c+\varepsilon^2a_2,
    \qquad
    0<\varepsilon\ll1
    \label{eq:a_expansion}
\end{equation}
and introduce the slow time variable \(\tau=\varepsilon^2t\). We also expand the solution about the homogeneous steady state \(\bar u\) as
\begin{equation}
    u(x,t)
    =
    \bar u
    +
    \varepsilon u_1(x,\tau)
    +
    \varepsilon^2u_2(x,\tau)
    +
    \varepsilon^3u_3(x,\tau)
    +
    O(\varepsilon^4)
    \label{eq:weak_expansion}
\end{equation}
Under the assumption that the critical mode is simple, the first-order term can be written as
\begin{equation}
    u_1(x,\tau)
    =
    A(\tau)e^{ik_cx}
    +
    \overline{A(\tau)}e^{-ik_cx}
    \label{eq:u1_def}
\end{equation}
where \(A(\tau)\) is a complex amplitude depending on the slow time.

A standard multiple-scale expansion, or a center-manifold reduction, shows that the amplitude \(A(\tau)\) of the critical mode satisfies the Stuart--Landau equation
\begin{equation}
    \frac{dA}{d\tau}
    =
    \sigma A
    -
    c_{\mathrm L}|A|^2A
    \label{eq:SL_equation}
\end{equation}
Here, \(\sigma\) is the linear growth coefficient proportional to the deviation from the critical value and, for the present model,
$    \sigma=a_2\bar u(1-\bar u^m)\widehat K(k_c)k_c^2
$.
Moreover, \(c_{\mathrm L}\) is the Landau coefficient, which determines the bifurcation type.

In the Stuart--Landau equation, \(A=0\) is the trivial solution corresponding to the homogeneous steady state.
By contrast, a nontrivial steady solution with \(A\neq0\) satisfies
\[
    |A|^2=\frac{\sigma}{c_{\mathrm L}}
\]
Therefore, when \(c_{\mathrm L}>0\), a branch of nontrivial small-amplitude steady solutions emerges on the \(\sigma>0\) side and connects continuously to the trivial solution \(A=0\) as \(\sigma\to0+\). Since a branch of small-amplitude patterned solutions bifurcates from the homogeneous state at \(\sigma=0\), we refer to this case as a supercritical bifurcation.
When \(c_{\mathrm L}<0\), by contrast, the branch of nontrivial small-amplitude steady solutions emerges on the \(\sigma<0\) side. Consequently, small-amplitude saturation does not occur on the linearly unstable side \(\sigma>0\), and solutions beyond the critical point are expected to transition to the large-amplitude regime. We refer to this case as a subcritical bifurcation.

\subsection{Explicit expression for the Landau coefficient and bifurcation criterion}
\label{subsec:explicit_landau}

To obtain an explicit expression for the Landau coefficient, we introduce the diffusion potential
\[
    P(u):=\frac{m}{m+1}u^{m+1}
\]
Then \(\nabla\cdot(u\nabla u^m)=\Delta P(u)\). We introduce the following coefficients:
\begin{align}
    D_1 &= P'(\bar u)=m\bar u^m, \notag\\
    D_2 &= \frac12P''(\bar u)=\frac12m^2\bar u^{m-1}, \notag\\
    D_3 &= \frac16P'''(\bar u)
          =\frac16m^2(m-1)\bar u^{m-2},
          \label{eq:D_coefficients}\\
    G_0 &= g(\bar u)=\bar u(1-\bar u^m), \notag\\
    G_1 &= g'(\bar u)=1-(m+1)\bar u^m, \notag\\
    G_2 &= \frac12g''(\bar u)
          =-\frac12m(m+1)\bar u^{m-1}.
          \label{eq:G_coefficients}
\end{align}
At second order in \(\varepsilon\), the quadratic nonlinear interaction of the critical mode \(e^{ik_cx}\) generates the second harmonic
\(e^{2ik_cx}\).
Because the present model is in divergence form, no mean component arises at this order.
Therefore, the second-order term can be written using a real coefficient \(U_2\) as
\[
    u_2
    =
    U_2A^2e^{2ik_cx}
    +
    {U_2}\,\overline A^2e^{-2ik_cx}
\]
Furthermore, since the second harmonic \(2k_c\) is noncritical, substituting this \(u_2\) into the second-order equation in \(\varepsilon\) determines \(U_2\):
\begin{equation}
    U_2
    =
    \frac{1}{2D_1(\kappa-1)}
    \left(
        2D_2-\frac{D_1G_1}{G_0}
    \right).
    \label{eq:U2_formula}
\end{equation}
Here, the second-harmonic response ratio \(\kappa\) is defined by
\begin{equation}
    \kappa
    :=
    \frac{\widehat K(2k_c)}{\widehat K(k_c)}
    \label{eq:kappa_def}
\end{equation}
Requiring the critical-mode component to vanish in the third-order equation in \(\varepsilon\) determines the Landau coefficient \(c_{\mathrm L}\) appearing in equation \eqref{eq:SL_equation} as follows:
\begin{align}
        c_{\mathrm L}
    &=
    k_c^2
    \left[
        U_2
        \left(
            2D_2
            -
            \frac{D_1G_1}{G_0}(2\kappa-1)
        \right)
        +
        3D_3
        -
        \frac{D_1G_2}{G_0}
    \right]\\
        &=
    \frac{m\bar u^{m-2}k_c^2}{2}
    \mathcal B(m,\bar u,\kappa).
    \label{eq:landau_constant}
\end{align}
Here,
\begin{equation}
    \begin{aligned}
        \mathcal B(m,\bar u,\kappa)
        &:=
        \underbrace{
            \frac{(m-\Phi)\bigl(m-\Phi(2\kappa-1)\bigr)}
            {\kappa-1}
        }_{\text{quadratic nonlinear contribution (from \(U_2\))}}
        \\[-1mm]
        &\quad+
        \underbrace{
            m\frac{m-1+2\bar u^m}{1-\bar u^m}.
        }_{\text{cubic nonlinear contribution (positive)}}
    \end{aligned}
    \label{eq:B_def}
\end{equation}
\[
    \Phi:=
    \frac{1-(m+1)\bar u^m}{1-\bar u^m}.
\]
The appearance of \(\kappa\) in the Landau coefficient \(c_L\) shows that the extent to which the second harmonic \(2k_c\) is attenuated by the kernel relative to the fundamental mode \(k_c\) plays an important role in the weakly nonlinear analysis.
Details of the coefficient calculations are provided in \ref{app:weakly_nonlinear_details}.

In \eqref{eq:landau_constant}, the coefficient multiplying \({\mathcal B}\) is positive for
\(0<\bar u<1\) and \(m\ge1\).
Therefore, the sign of the Landau coefficient \(c_{\mathrm L}\) is determined by the sign of \({\mathcal B}\).

Equation \eqref{eq:B_def} represents the competition between the first term, which arises from the quadratic nonlinear interaction, and the second term, which arises from the cubic self-interaction.
Since the second term is positive, it acts as a stabilizing effect that drives the Landau coefficient \(c_{\mathrm L}\) in the positive direction.
For the first term, \(m-\Phi>0\) under the degenerate-diffusion condition \(m\ge1\). When the second-harmonic response ratio satisfies \(\kappa<1\), the denominator \(\kappa-1\) is negative, whereas the sign of the remaining factor
$
    m-\Phi(2\kappa-1)
$
depends on the parameters.

In particular, if \(m-\Phi(2\kappa-1)>0\), the first term drives \({\mathcal B}\) in the negative direction, and the quadratic nonlinear effect acts to promote a subcritical bifurcation.
By contrast, if \(m-\Phi(2\kappa-1)<0\), the first term also increases \({\mathcal B}\), and therefore acts in the same direction as the cubic stabilizing effect.

\begin{proposition}[Explicit Landau coefficient]
\label{thm:explicit_landau}
Under the assumption of a simple critical mode, suppose that the second-harmonic response ratio satisfies \(0\le \kappa<1\).
Then the Landau coefficient appearing in the Stuart--Landau equation \eqref{eq:SL_equation} is given by \eqref{eq:landau_constant}.
Moreover, the critical bifurcation is supercritical if \(\mathcal B(m,\bar u,\kappa)>0\) and subcritical if \(\mathcal B(m,\bar u,\kappa)<0\).
\end{proposition}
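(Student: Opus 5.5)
The derivation is a multiple-scale expansion. We substitute \eqref{eq:a_expansion} and \eqref{eq:weak_expansion} into \eqref{eq:main_model}, written as \(\partial_t u=\partial_x^2P(u)-a\,\partial_x\bigl(g(u)\,\partial_x(K*u)\bigr)\), with \(\partial_t=\varepsilon^2\partial_\tau\). We then Taylor-expand \(P\) and \(g\) about \(\bar u\) using the coefficients \eqref{eq:D_coefficients}--\eqref{eq:G_coefficients}, and collect powers of \(\varepsilon\). Throughout, every term is a trigonometric polynomial in \(e^{ik_cx}\). Convolution acts as multiplication by \(\widehat K(nk_c)\), and \(\partial_x\) acts as multiplication by \(ink_c\). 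So each order reduces to algebra on Fourier coefficients.

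\textbf{Linear order.} At \(O(\varepsilon)\) we obtain \(\mathcal L u_1=0\), where \(\mathcal L v:=D_1\partial_x^2v-a_cG_0\partial_x^2(K*v)\). This holds for \eqref{eq:u1_def} by the criticality condition \(a_cG_0\widehat K(k_c)=D_1\) from Theorem~\ref{thm:linear_threshold}. We use this relation throughout to replace \(a_cG_0\widehat K(k_c)\) by \(D_1\); this elimination is what removes \(a\) from \(\mathcal B\). Note that \(\mathcal L\) acts on \(e^{ink_cx}\) by the symbol \(-n^2k_c^2\bigl(D_1-a_cG_0\widehat K(nk_c)\bigr)=-n^2k_c^2D_1(1-\kappa_n)\), where \(\kappa_n=\widehat K(nk_c)/\widehat K(k_c)\).

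\textbf{Second order.} At \(O(\varepsilon^2)\) the forcing is \(-\partial_x^2(D_2u_1^2)+a_c\partial_x\bigl(G_1u_1\partial_x(K*u_1)\bigr)\). The zero mode cancels because the \(e^{ik_cx}e^{-ik_cx}\) products sit inside an outer \(\partial_x\), which annihilates constants. The \(e^{2ik_cx}\) coefficient is \(4k_c^2A^2\bigl(D_2-\tfrac12 D_1G_1/G_0\bigr)\). Dividing by the symbol \(-4k_c^2D_1(1-\kappa)\), which is nonzero by the hypothesis \(0\le\kappa<1\) together with the simple-mode assumption, gives \eqref{eq:U2_formula}.

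\textbf{Third order.} At \(O(\varepsilon^3)\) we collect three groups of \(e^{ik_cx}\) terms:
\begin{itemize}[label={}]
\item the time derivative \(\partial_\tau A\);
\item the detuning \(a_2G_0\widehat K(k_c)k_c^2A\), which yields \(\sigma\);
\item the nonlinear contributions. These are \(\partial_x^2(2D_2u_1u_2+D_3u_1^3)\), the terms \(a_c\partial_x\bigl(G_1u_2\partial_xK*u_1+G_1u_1\partial_xK*u_2+G_2u_1^2\partial_xK*u_1\bigr)\), and an \(O(\varepsilon^2)\) correction from \(a_2\) multiplying the linear adhesion term.
\end{itemize}
We then impose the Fredholm solvability condition: since \(\mathcal L\) is self-adjoint with kernel spanned by \(e^{\pm ik_cx}\) at \(a=a_c\), the \(e^{ik_cx}\) coefficient of the forcing must vanish. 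This yields \eqref{eq:SL_equation} with the bracketed form of \(c_{\mathrm L}\). The factor \((2\kappa-1)\) comes from combining \(\widehat K(2k_c)\) and \(\widehat K(k_c)\) in the two \(G_1\)-terms, after normalizing by \(a_cG_0\widehat K(k_c)=D_1\).

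\textbf{Algebraic simplification.} Next we insert the explicit \(D_i,G_i\) and factor out \(\tfrac12 m\bar u^{m-2}\). For example, \(2D_2-D_1G_1/G_0=m\bar u^{m-1}(m-\Phi)\) and \(3D_3-D_1G_2/G_0=\tfrac12m\bar u^{m-2}\cdot m(m-1+2\bar u^m)/(1-\bar u^m)\). Collecting terms gives \eqref{eq:landau_constant}--\eqref{eq:B_def}.

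\textbf{Classification.} The prefactor \(m\bar u^{m-2}k_c^2/2\) is positive, so \(\operatorname{sgn}c_{\mathrm L}=\operatorname{sgn}\mathcal B\). The discussion after \eqref{eq:SL_equation} then gives the classification: for \(c_{\mathrm L}>0\) the nontrivial steady branch \(|A|^2=\sigma/c_{\mathrm L}\) lies on the side \(\sigma>0\), i.e.\ \(a>a_c\), which is the supercritical case. For \(c_{\mathrm L}<0\) it lies on \(\sigma<0\), the subcritical case.

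The main obstacle is the third-order bookkeeping. We must track correctly every \(e^{ik_cx}\) contribution from products such as \(u_1u_2\) (\(e^{2ik_cx}\cdot e^{-ik_cx}\)) and \(u_1^3\) (multiplicity 3), together with the signs from the two derivatives and the kernel symbols at \(k_c\) versus \(2k_c\). I would do this first symbolically with generic \(\widehat K(k_c),\widehat K(2k_c)\), and then check the result against a numerical bifurcation diagram.
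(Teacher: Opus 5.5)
Your proposal is correct and follows essentially the same route as the paper's Appendix~B: Fourier bookkeeping order by order using the criticality relation $a_cG_0\widehat K(k_c)=D_1$, inversion of the $2k_c$ symbol to get $U_2$, the vanishing of the $e^{ik_cx}$ coefficient at third order, and then the same algebraic factorization and sign argument. The one slip is the extra ``$O(\varepsilon^2)$ correction from $a_2$'' that you list among the nonlinear terms. The detuning $a_2G_0\partial_{xx}(K*u_1)$ is the only $a_2$ contribution at $O(\varepsilon^3)$, since $a_2$ times any quadratic term is $O(\varepsilon^4)$, so drop that item rather than counting it twice.
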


\begin{remark}
The adhesion strength \(a\) does not appear explicitly in the Landau coefficient \(c_{\mathrm L}\).
This property, however, is specific to the local bifurcation criterion near the critical point. In parameter regimes away from the critical point, the amplitude and shape of the resulting patterns may depend on \(a\).
\end{remark}

\subsection{Transition to subcriticality for large nonlinear exponents}
\label{sec:large_m_subcriticality}

\begin{figure*}[t]
    \centering
    \includegraphics[width=0.75\linewidth]{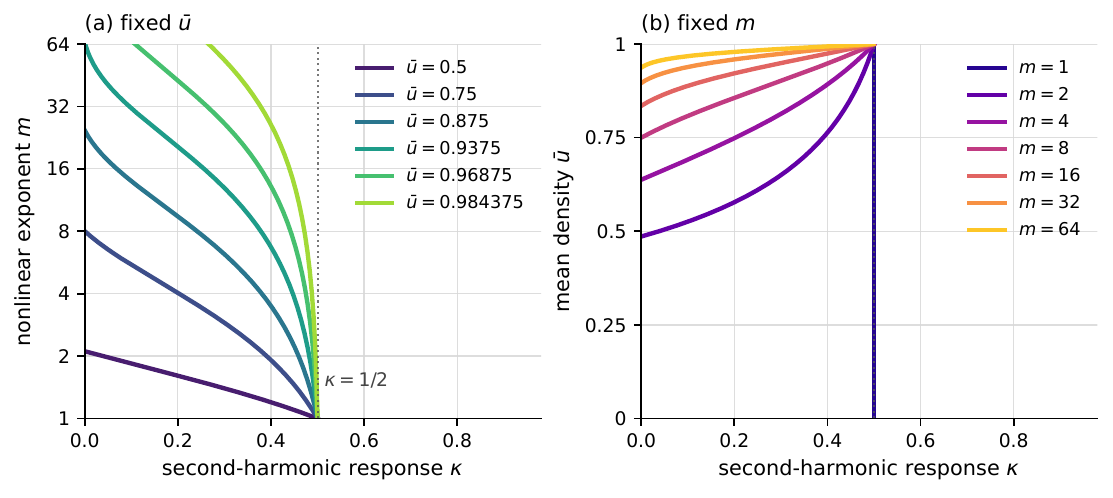}
    \caption{
    Critical curves \(\mathcal B(m,\bar u,\kappa)=0\) across which the Landau coefficient \(c_{\mathrm L}\) changes sign.
    (a) Critical curves in the \((\kappa,m)\) plane for \(\bar u=1/2,3/4,7/8,15/16,31/32,63/64\).
    (b) Critical curves in the \((\kappa,\bar u)\) plane for \(m=1,2,4,8,16,32,64\).
    The dashed line indicates \(\kappa=1/2\).
    }
    \label{fig:landau_classifier}
\end{figure*}

As shown in Fig.~\ref{fig:landau_classifier}, as the nonlinear exponent \(m\) increases, the range of the second-harmonic response ratio \(\kappa\) over which a supercritical bifurcation occurs becomes narrower.
In this subsection, we use \(\mathcal B\) to justify this tendency mathematically.

\begin{theorem}[Large-\(m\)-induced subcriticality]
\label{thm:large_m_obstruction}
For every mean density \(\bar u\in(0,1)\), there exists
\(m_c=m_c(\bar u)\ge1\) such that, if \(m>m_c\), then
\[
    {\mathcal B}(m,\bar u,\kappa)<0
\]
for every \(0\le\kappa<1\).
Consequently, the critical bifurcation is subcritical throughout this range.
\end{theorem}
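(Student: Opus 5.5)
The key observation is that, after clearing the denominator \(\kappa-1\), the quantity \((1-\kappa)\mathcal B\) is affine in \(\kappa\). It therefore suffices to control its sign at the two endpoints of \([0,1)\) and then let \(m\to\infty\) with \(\bar u\) fixed.

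\emph{Reduction to an affine function.} Fix \(\bar u\in(0,1)\) and write \(s:=\bar u^m\). Then \(s\to0\) geometrically as \(m\to\infty\); in particular \(m^2s\to0\). Set \(t:=1-\kappa\in(0,1]\), so that \(2\kappa-1=1-2t\) and \(m-\Phi(2\kappa-1)=(m-\Phi)+2t\Phi\). Multiplying \eqref{eq:B_def} by \(t>0\), which does not change the sign, gives
\[
    \mathcal Q(t):=(1-\kappa)\mathcal B
    =-(m-\Phi)\bigl(m-\Phi+2t\Phi\bigr)+t\,m\frac{m-1+2s}{1-s}.
\]
This is affine in \(t\). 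Hence \(\mathcal Q<0\) on all of \((0,1]\) follows once \(\mathcal Q(0)\le0\) and \(\mathcal Q(1)<0\), because an affine function lies strictly below zero on the half-open segment between two such endpoint values.

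\emph{Endpoint \(t=0\) (i.e.\ \(\kappa\to1\)).} Here \(\mathcal Q(0)=-(m-\Phi)^2\). This is strictly negative as soon as \(m\neq\Phi\), and \(m-\Phi>0\) was already noted after \eqref{eq:B_def}. Indeed, \(\Phi\le1\) because \(1-(m+1)s\le1-s\), and so \(m-\Phi\ge m-1\). For \(m=1\) one has \(\Phi<1\), so \(m-\Phi>0\) still holds.

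\emph{Endpoint \(t=1\) (i.e.\ \(\kappa=0\)).} Direct expansion gives
\[
    \mathcal Q(1)=\Phi^2-m^2+m\frac{m-1+2s}{1-s}
    =\Phi^2+m\frac{(m+2)s-1}{1-s}.
\]
As \(m\to\infty\) with \(\bar u\) fixed, we have \(\Phi=1-\frac{ms}{1-s}\to1\), \(ms\to0\) and \(m^2s\to0\). Therefore \(\mathcal Q(1)=1-m+o(1)\to-\infty\). We then choose \(m_c(\bar u)\ge1\) so that \(\mathcal Q(1)<0\) for all \(m>m_c\). For an explicit choice one can bound \(|\Phi|\le1\) (valid once \((m+2)s\le 2\)) and require \(m\bigl(1-(m+2)s\bigr)>1-s\); both conditions hold for all sufficiently large \(m\) since \(m^2\bar u^m\to0\).

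\emph{Conclusion.} Combining the two endpoints, \(\mathcal Q(t)<0\) for all \(t\in(0,1]\). Hence \(\mathcal B(m,\bar u,\kappa)<0\) for every \(0\le\kappa<1\) whenever \(m>m_c\). Subcriticality then follows from Proposition~\ref{thm:explicit_landau}, since the prefactor in \eqref{eq:landau_constant} is positive.

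The main obstacle is uniformity in \(\kappa\). Near \(\kappa\to1\) the quadratic term of \eqref{eq:B_def} blows up, and a naive asymptotic comparison of the two terms in \eqref{eq:B_def} is not uniform there. The affine-in-\(t\) reformulation removes this difficulty. The only genuinely asymptotic step is the \(t=1\) estimate, where the leading terms \(\pm m^2\) cancel and the sign is decided by the subleading \(-m\) term. That term must be shown to dominate \(\Phi^2\) together with the \(O(m^2s)\) corrections, which is why the decay \(m^2\bar u^m\to0\) is used explicitly.
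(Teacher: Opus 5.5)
Your proof is correct and is essentially the paper's argument: the paper also reduces to the endpoint \(\kappa=0\) via the bound \(\mathcal B(m,\bar u,\kappa)\le\mathcal B(m,\bar u,0)\), which it states as monotonicity in \(\kappa\), and then uses \(\mathcal B(m,\bar u,0)=-m+O(1)\). Your affine-in-\(1-\kappa\) reformulation with \(\mathcal Q(0)=-(m-\Phi)^2\le 0\) justifies that monotonicity step, which the paper asserts without proof, and your closed form \(\mathcal B(m,\bar u,0)=\Phi^2+m\frac{(m+2)\bar u^m-1}{1-\bar u^m}\) makes explicit the cancellation of the \(m^2\) terms behind \(-m+O(1)\).
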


\begin{proof}
Since \(\mathcal B(m,\bar u,\kappa)\) is monotonically decreasing with respect to \(\kappa\), for every \(0\le\kappa<1\),
\[
    \mathcal B(m,\bar u,\kappa)
    \le
    \mathcal B(m,\bar u,0).
\]
For fixed \(\bar u\in(0,1)\), as \(m\to\infty\),
\(v=\bar u^m\to0\) and \(\Phi\to1\). Therefore,
\[
    \mathcal B(m,\bar u,0)
    =
    -m+O(1)
    \qquad
    (m\to\infty).
\]
Hence, there exists \(m_c(\bar u)\) such that, whenever \(m>m_c(\bar u)\),
\[
    \mathcal B(m,\bar u,0)<0.
\]
Combining this inequality with the monotonicity estimate proves the result.
\end{proof}

\subsection{System-size dependence for the tent kernel}
\label{subsec:tent_system_size}

We examine how the second-harmonic response ratio \(\kappa\) varies with the system size for the tent kernel.
For the one-dimensional tent base kernel, the Fourier response is given by
$
    \widehat K(k)
    =
    \frac{2(1-\cos(kR))}{(kR)^2}
$.
Moreover, when the fundamental mode is selected as the critical mode on a periodic domain, \(k_c=\pi/L\). The second-harmonic response ratio is then
\begin{equation}
    \kappa_{\rm tent}
    =
    \frac{\widehat K(2k_c)}{\widehat K(k_c)}
    =
    \cos^2\left(\frac{k_cR}{2}\right)
=
    \cos^2\left(\frac{\pi R}{2L}\right)
    \label{eq:kappa_tent_system_size}
\end{equation}

In this paper, we nondimensionalize the system using the sensing radius as the reference length and set \(R=1\). In this case, \(\kappa_{\rm tent}=1/2\) when \(L=2\). The weakly nonlinear analysis in Subsection~\ref{subsec:explicit_landau} gives
\(\mathcal{B}(1,\frac{1}{2},\frac{1}{2})=0\) for \(m=1\) and \(\bar u=1/2\). Thus, \(L=2\) marks the boundary between supercritical and subcritical bifurcations.

The results of a numerical bifurcation analysis confirming the above calculation are shown in Fig.~\ref{fig:tent_system_bifurcation}. Here, \(m=1\), \(\bar u=1/2\), and \(R=1\). For \(L=1.9\),
\[
    \kappa_{\rm tent}
    =
    0.4587<\frac12,
    \qquad
    c_{\mathrm L}=0.4171>0
\]
and the weakly nonlinear analysis predicts a supercritical bifurcation. Indeed, in Fig.~\ref{fig:tent_system_bifurcation} (a), the nonhomogeneous branch bifurcating from the homogeneous branch extends into the \(a>a_c\) regime. By contrast, for \(L=2.1\),
\[
    \kappa_{\rm tent}
    =
    0.5374>\frac12,
    \qquad
    c_{\mathrm L}=-0.3615<0
\]
and the weakly nonlinear analysis predicts a subcritical bifurcation. In Fig.~\ref{fig:tent_system_bifurcation} (b),  the small-amplitude nonhomogeneous branch extends into the \(a<a_c\) regime and, after passing through a fold point, connects to the large-amplitude branch.

Thus, for the tent kernel, the system size \(L\) controls the bifurcation type through the second-harmonic response ratio \(\kappa\).

\begin{figure*}[t]
    \centering
    \includegraphics[width=\linewidth]{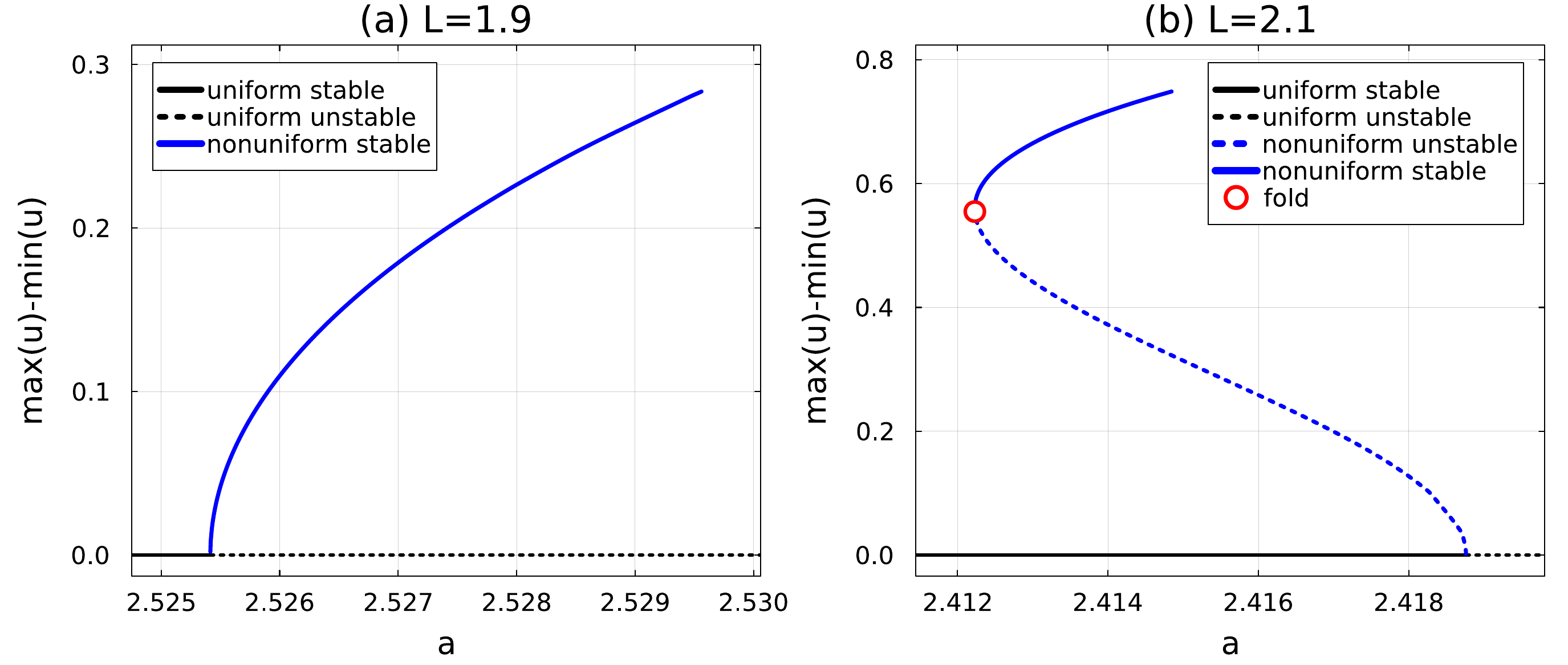}
    \caption{
    Bifurcation diagrams for the tent kernel with \(m=1\), \(\bar u=1/2\), and \(R=1\): (a) \(L=1.9\) and (b) \(L=2.1\).
    The horizontal axis represents the adhesion strength \(a\), and the vertical axis represents the magnitude of the spatial heterogeneity, \(\max(u)-\min(u)\).
    The black curves represent the homogeneous branch, and the blue curves represent the nonhomogeneous branch.
    Solid curves indicate stable branches, whereas dashed curves indicate unstable branches. The red open circle indicates the fold point.
    }
    \label{fig:tent_system_bifurcation}
\end{figure*}

The bifurcation types for other purely attractive kernels can also be classified within the same framework because differences between the kernels are reflected in \(\kappa\).
The Fourier responses of the kernels and the corresponding evaluations of \(\kappa\) are summarized in \ref{app:kernels}, while the numerical bifurcation method is described in \ref{app:numerical_bifurcation}.

\section{Mesa limit and capacity-constrained energy structure}
\label{sec:mesa_limit}

In this section, we interpret the mesa profiles that emerge for large values of the nonlinear exponent \(m\) in terms of the minimization problem for the limiting energy obtained from the \(m\to\infty\) limit (mesa limit) of the total energy \eqref{eq:energy}.
The \(m\to\infty\) limit of the free energy for nonlinear nonlocal aggregation--diffusion equations was analyzed as a mesa limit by Carrillo--Gvalani~\cite{CarrilloGvalani2021}.
For porous-medium-type free energies, they established \(\Gamma\)-convergence and characterized the minimizers of the corresponding limiting energy.
In this section, we apply a similar approach to the total energy \eqref{eq:energy} of the present model.

In what follows, we write the entropy density defined in \eqref{eq:F_def} as \(F_m\) to make its dependence on \(m\) explicit.
Similarly, we denote the total energy functional defined in \eqref{eq:energy} by \(E_m\).

In view of its interpretation as a cell density, we consider the energy structure of the present model under the capacity constraint \(0\le u\le1\).
We fix the total mass
$
    M=\int_\Omega u_0(x)\,dx
    =
    \bar u|\Omega|
$.
We define the space of functions with total mass \(M\) by
\[
    \mathcal P_M(\Omega)
    :=
    \left\{
        u\in L^1(\Omega)
        \,\middle|\,
        \int_\Omega u\,dx=M
    \right\},
\]
and the admissible set satisfying the capacity constraint by
\begin{equation}
    \mathcal A_M
    :=
    \left\{
        u\in L^\infty(\Omega)\cap \mathcal{P}_M(\Omega):
        0\le u\le1 \ \text{a.e. in } \Omega
    \right\}.
    \label{eq:admissible_set_mesa}
\end{equation}
Here, \(0<M\le|\Omega|\).
We extend \(F_m\) by setting \(F_m(u):=+\infty\) for \(u\notin[0,1]\).
Consequently, \(E_m[u]=+\infty\) whenever \(u\notin\mathcal A_M\).

It follows from \eqref{eq:local_entropy_uniform} that the local entropy term vanishes on the admissible set \(\mathcal A_M\):
\begin{equation}
    \int_\Omega F_m(u)\,dx
    \le \frac{\pi^2}{6m}|\Omega|
    \to 0 \quad \text{as $m \to \infty$}.
    \label{eq:vanishing_entropy_term}
\end{equation}
This suggests that the limiting energy consists solely of the nonlocal attraction term.
Indeed, the following theorem holds.

\begin{theorem}[Mesa limit of the energy]
\label{thm:mesa_energy_limit}
Suppose that \(K\) is even and \(K\in L^1(\Omega)\).
Define the limiting energy by
\begin{equation}
    E_\infty[u]
    :=
    \begin{cases}
    -\dfrac{a}{2}\displaystyle\int_\Omega u(K*u)\,dx,
    & u\in\mathcal A_M,\\[3mm]
    +\infty,
    & u\notin\mathcal A_M.
    \end{cases}
    \label{eq:E_infty_def}
\end{equation}
Then \(E_m\) converges uniformly to \(E_\infty\) on \(\mathcal A_M\):
\begin{equation}
    0
    \le
    E_m[u]-E_\infty[u]
    \le
    \frac{\pi^2|\Omega|}{6m}
    \to 0
    \quad\text{as }m\to\infty.
    \label{eq:uniform_energy_bound}
\end{equation}
In particular, the energies \(E_m\), defined on
\(L^\infty(\Omega)\cap\mathcal P_M(\Omega)\), \(\Gamma\)-converge to \(E_\infty\) with respect to the weak-\(*\) topology of \(L^\infty(\Omega)\) as \(m\to\infty\).
\end{theorem}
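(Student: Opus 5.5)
The plan is in two parts: first a pointwise bound on \(F_m\) that gives the uniform convergence \eqref{eq:uniform_energy_bound}, then the two \(\Gamma\)-convergence inequalities.

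\emph{Step 1 (uniform bound).} We prove
\[
0\le F_m(u)\le F_m(1)=-\int_0^1\log(1-s^m)\,ds\le \frac{\pi^2}{6m}
\qquad\text{for } 0\le u\le1 .
\]
Nonnegativity and monotonicity hold because \(-\log(1-s^m)\ge0\) on \([0,1)\). For the constant we expand \(-\log(1-s^m)=\sum_{j\ge1}s^{mj}/j\) and integrate term by term, which is justified by monotone convergence. This gives
\[
F_m(1)=\sum_{j\ge1}\frac{1}{j(mj+1)}\le\frac1m\sum_{j\ge1}\frac1{j^2}=\frac{\pi^2}{6m}.
\]
This is presumably the estimate \eqref{eq:local_entropy_uniform}. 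For \(u\in\mathcal A_M\) the difference \(E_m[u]-E_\infty[u]\) equals \(\int_\Omega F_m(u)\,dx\). Integrating the pointwise bound over \(\Omega\) then gives \eqref{eq:uniform_energy_bound}. The nonlocal term is finite, since \(u\in L^\infty\) and \(K\in L^1\) give \(K*u\in L^\infty\) by Young's inequality. Outside \(\mathcal A_M\) both energies equal \(+\infty\).

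\emph{Step 2 (\(\Gamma\)-convergence in the weak-\(*\) topology of \(L^\infty\)).} For the recovery sequence we take the constant sequence \(u_m=u\). If \(u\in\mathcal A_M\), Step 1 gives \(E_m[u]\to E_\infty[u]\). Otherwise both energies are \(+\infty\), and the limsup inequality is trivial.

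For the liminf inequality, let \(u_m\rightharpoonup^* u\) and assume, after passing to a subsequence, that \(\liminf E_m[u_m]<\infty\). Then \(u_m\in\mathcal A_M\) along the subsequence. The set \(\mathcal A_M\) is weak-\(*\) closed, since the constraints \(0\le u\le1\) and \(\int u=M\) are tested against \(L^1\) functions, namely nonnegative test functions and the constant \(1\) on the bounded domain. Hence \(u\in\mathcal A_M\). Since \(E_m[u_m]\ge E_\infty[u_m]\), it suffices to prove
\[
\int_\Omega u_m(K*u_m)\,dx\to\int_\Omega u(K*u)\,dx .
\]
This continuity step is the main obstacle, because a quadratic functional is generally not continuous under weak convergence.

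To handle it, we use that \(K\in L^1\). For each fixed \(x\), \(K(x-\cdot)\in L^1(\Omega)\), so \(K*u_m(x)\to K*u(x)\) pointwise. Moreover, translations are continuous in \(L^1\) and the \(u_m\) are uniformly bounded by \(1\), so the family \(\{K*u_m\}\) is equicontinuous and uniformly bounded on the compact torus. By Arzel\`a--Ascoli, \(K*u_m\to K*u\) uniformly. We then split
\[
\int u_m(K*u_m)-\int u(K*u)=\int u_m\,(K*u_m-K*u)+\int (u_m-u)(K*u).
\]
The first term tends to zero by the uniform convergence together with \(\|u_m\|_{L^1}=M\). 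The second tends to zero by weak-\(*\) convergence tested against \(K*u\in L^1\). The evenness of \(K\) is not needed for this argument; it only ensures that the energy is symmetric, consistent with \eqref{eq:chemical_potential}.

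Combining the two steps, \(\liminf E_m[u_m]\ge\lim E_\infty[u_m]=E_\infty[u]\), which completes the \(\Gamma\)-convergence.
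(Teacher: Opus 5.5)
Your proof is correct and has the same overall structure as the paper's. Both use the series bound \(F_m(1)=\sum_{j\ge1}1/(j(mj+1))\le\pi^2/(6m)\) and the constant recovery sequence. Both get the liminf inequality from \(E_m\ge E_\infty\) on \(\mathcal A_M\), weak-\(*\) closedness of \(\mathcal A_M\), and continuity of \(Q[u]=\int_\Omega u(K*u)\,dx\) along weak-\(*\) convergent sequences.

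The difference is how that continuity is proved. The paper turns weak-\(*\) convergence in \(L^\infty\) into weak convergence in \(L^2\). It then shows \(T_Ku=K*u\) is compact on \(L^2(\Omega)\): Fourier truncations approximate it in operator norm, since the tail has norm \(\sup_{|k|>N}|\widehat K(k)|\to0\) by Riemann--Lebesgue. This gives \(T_Ku_n\to T_Ku\) strongly in \(L^2\).

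You instead use the \(L^\infty\) bound directly. Continuity of translations in \(L^1\) makes \(\{K*u_m\}\) equicontinuous and uniformly bounded. Testing against \(K(x-\cdot)\in L^1\) gives pointwise convergence, hence \(K*u_m\to K*u\) uniformly.

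Your route is more elementary, with no Fourier analysis. It also gives a stronger conclusion, namely uniform convergence of the adhesion potentials. However, it genuinely needs \(\sup_m\|u_m\|_{L^\infty}<\infty\), because for \(K\in L^1\) and \(u\in L^2\) the convolution need not be continuous.

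The paper's compactness statement holds for arbitrary weakly convergent sequences in \(L^2\). It is reused in that form for Proposition~\ref{prop:minimizer_for_m} and Proposition~\ref{prop:characteristic_mesa}. Since every sequence actually used there lies in \(\mathcal A_M\), your uniform-convergence lemma would serve equally well.

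You are also right that evenness of \(K\) plays no role in the \(\Gamma\)-convergence argument. The paper's proof never invokes it there either.
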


\noindent
The proof is provided in \ref{app_subsec:gamma_convergence_proof}.
As in the mesa limit of porous-medium-type free energies studied by Carrillo--Gvalani~\cite{CarrilloGvalani2021}, Theorem~\ref{thm:mesa_energy_limit} shows that the local entropy term vanishes in the limit, leaving a capacity-constrained nonlocal attraction energy.
The origin of the capacity constraint, however, is different.
In the setting of Carrillo--Gvalani~\cite{CarrilloGvalani2021}, the capacity constraint is not imposed for finite \(m\); instead, it emerges in the limit \(m\to\infty\) because the local entropy term excludes high-density regions.
In the present model, by contrast, the capacity constraint \(0\le u\le1\) is incorporated from the outset as a physical constraint on the cell density, even for finite \(m\).
Thus, the local entropy term in the present model does not generate the capacity constraint; rather, it vanishes in the limit on an already constrained admissible set.
Combining this observation with \eqref{eq:vanishing_entropy_term} yields both the uniform convergence of the total energies on \(\mathcal A_M\) and their \(\Gamma\)-convergence.

We conclude with a comment on the assumptions imposed on the kernel in Theorem~\ref{thm:mesa_energy_limit}.
In this study, kernels defined on the torus are constructed by periodizing symmetric kernels defined on the whole space.
The resulting periodized kernels therefore inherit the symmetry and boundedness of the original kernels.
For example, all the kernels listed in Table~1 of \ref{app:kernels} are symmetric and bounded.
Consequently, the corresponding kernels on the torus are even and belong to \(L^1(\Omega)\).

\subsection{Limiting minimizers and characteristic-function-type structure}
\label{subsec:mesa_minimizers}

We first establish the existence of a minimizer of the total energy \(E_m\) over \(\mathcal A_M\) for each finite \(m\).

\begin{proposition}[Existence of minimizers for fixed \(m\)]
\label{prop:minimizer_for_m}
Suppose that \(K\in L^1(\Omega)\).
Then, for every \(m\ge1\), the total energy functional
\(E_m[u]\)
admits a minimizer over \(\mathcal A_M\).
\end{proposition}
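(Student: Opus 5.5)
We use the direct method of the calculus of variations in the weak-\(*\) topology of \(L^\infty(\Omega)\). The admissible set \(\mathcal A_M\) is nonempty, since the constant \(u\equiv\bar u=M/|\Omega|\in(0,1]\) belongs to it. Moreover, \(E_m\) is bounded below on \(\mathcal A_M\). Indeed, \(F_m\ge0\) on \([0,1]\) because \(-\log(1-s^m)\ge0\). For the nonlocal term, Young's inequality gives \(|\int_\Omega u(K*u)\,dx|\le\|u\|_{L^\infty}\|K*u\|_{L^1}\le\|K\|_{L^1}\|u\|_{L^1}\le\|K\|_{L^1}M\). Hence \(E_m\ge -\frac a2\|K\|_{L^1}M\), and the infimum \(\inf_{\mathcal A_M}E_m\) is finite; it is also finite from above since \(F_m(1)<\infty\). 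Take a minimizing sequence \((u_j)\subset\mathcal A_M\). Since \(\|u_j\|_{L^\infty}\le1\), the Banach--Alaoglu theorem gives a subsequence with \(u_j\rightharpoonup^* u\) in \(L^\infty(\Omega)\). The set \(\mathcal A_M\) is weak-\(*\) closed: the mass constraint passes to the limit by testing against \(1\in L^1\), and the pointwise bounds \(0\le u\le1\) are preserved because they are equivalent to \(\int u\varphi\in[0,\int\varphi]\) for all \(0\le\varphi\in L^1\). Thus \(u\in\mathcal A_M\).

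It remains to show weak-\(*\) lower semicontinuity of \(E_m\) along the sequence. The local term is easy. \(F_m\) is convex on \([0,1]\), since \(F_m'(u)=-\log(1-u^m)\) is nondecreasing, and it is continuous up to \(u=1\) with a finite improper integral. A convex integrand with values in \([0,\infty]\) yields a functional that is weakly lower semicontinuous in \(L^1\), hence along weak-\(*\) convergent sequences in \(L^\infty\) (Mazur plus Fatou). Therefore \(\int F_m(u)\le\liminf\int F_m(u_j)\).

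The main obstacle is the nonlocal term, which is quadratic and not convex in general because \(\widehat K\) may change sign. We instead aim to prove it is actually continuous along the sequence, using compactness of the convolution. Since \(K\in L^1(\Omega)\) on the torus, the operator \(u\mapsto K*u\) maps bounded sets of \(L^\infty\) into relatively compact sets of \(L^1(\Omega)\). To see this, use continuity of translations in \(L^1\): \(\|(K*u)(\cdot+h)-K*u\|_{L^1}\le\|K(\cdot+h)-K\|_{L^1}\|u\|_{L^\infty}|\Omega|\to0\) uniformly, and then apply Kolmogorov--Riesz. Alternatively, approximate \(K\) in \(L^1\) by smooth kernels \(K_\delta\), for which \(K_\delta*u_j\to K_\delta*u\) uniformly by Arzelà--Ascoli, and control the error by \(\|K-K_\delta\|_{L^1}\). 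Either way, \(K*u_j\to K*u\) strongly in \(L^1\) along a further subsequence, since weak-\(*\) convergence identifies the limit.

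Then we split
\[
\int u_j(K*u_j)-\int u(K*u)=\int u_j\bigl(K*u_j-K*u\bigr)+\int (u_j-u)(K*u).
\]
The first term is bounded by \(\|K*u_j-K*u\|_{L^1}\to0\), using \(\|u_j\|_\infty\le1\). The second term tends to zero by weak-\(*\) convergence, since \(K*u\in L^1\). Combining this with the lower semicontinuity of the entropy term gives \(E_m[u]\le\liminf E_m[u_j]=\inf_{\mathcal A_M}E_m\), so \(u\) is a minimizer.
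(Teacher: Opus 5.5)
Your proposal is correct and follows the same route as the paper: the direct method in the weak-$*$ topology of $L^\infty(\Omega)$, a lower bound from $F_m\ge0$ and Young's inequality, convexity for lower semicontinuity of the entropy term, and compactness of $u\mapsto K*u$ to make the quadratic term continuous along the minimizing sequence. The differences are only technical. You obtain compactness into $L^1(\Omega)$ via translation continuity and Kolmogorov--Riesz, whereas the paper uses compactness on $L^2(\Omega)$ via Fourier truncation and the Riemann--Lebesgue lemma. For the entropy term you use Mazur plus Fatou, whereas the paper uses Rockafellar's conjugate representation.
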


\noindent
The proof is provided in \ref{app_subsec:minimizer_convergence_proof}.

It follows from Theorem~\ref{thm:mesa_energy_limit} that, after passing to a subsequence, minimizers of the total energy \(E_m\) converge to a minimizer of the limiting energy \(E_\infty\).
This is a standard consequence of the theory of \(\Gamma\)-convergence; see Rindler~\cite{rindler}.

\begin{corollary}[Convergence of minimizers]
\label{cor:mesa_minimizer_convergence}
Let \(u_m\in\mathcal A_M\) be a minimizer of \(E_m\).
Then there exist a subsequence \(m_j\) and a function \(u_\infty\in\mathcal A_M\) such that
\[
    u_{m_j}
    \rightharpoonup^\ast
    u_\infty
    \qquad
    \text{in } L^\infty(\Omega) \quad \text{as $j \to \infty$}.
\]
Moreover, \(u_\infty\) is a minimizer of \(E_\infty\).
\end{corollary}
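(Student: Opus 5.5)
The proof follows the standard $\Gamma$-convergence argument. First I would extract the limit by compactness. Every minimizer $u_m$ lies in $\mathcal A_M$, so $\|u_m\|_{L^\infty(\Omega)}\le1$ uniformly in $m$. By Banach--Alaoglu, since $L^\infty(\Omega)=(L^1(\Omega))^*$ with $L^1$ separable and $\Omega$ bounded, there are a subsequence $m_j\to\infty$ and $u_\infty\in L^\infty(\Omega)$ with $u_{m_j}\rightharpoonup^\ast u_\infty$.

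Next I would check that $\mathcal A_M$ is weak-$*$ closed, so that $u_\infty\in\mathcal A_M$. Testing against $\mathbf 1_\Omega\in L^1$ keeps the mass equal to $M$. Testing against $\mathbf 1_S$ for measurable $S\subset\Omega$ shows that $\int_S u_\infty\in[0,|S|]$, which gives $0\le u_\infty\le1$ a.e.

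The main step is to show that $u_\infty$ minimizes $E_\infty$, and I would route it through the limiting energy directly. The key tool is weak-$*$ continuity of the quadratic form $Q[u]=\int_\Omega u(K*u)\,dx$ on bounded subsets of $L^\infty$. Since $K\in L^1(\Omega)$, convolution $u\mapsto K*u$ is compact from $L^\infty$ (bounded sets) into $L^1$, and in fact into $C(\Omega)$ after approximating $K$ in $L^1$ by continuous kernels. I would prove this via Fréchet--Kolmogorov and translation continuity of $K$ in $L^1$. Then $K*u_{m_j}\to K*u_\infty$ strongly in $L^1$, which gives
\[
\int u_{m_j}(K*u_{m_j})-\int u_\infty(K*u_\infty)=\int u_{m_j}\bigl(K*(u_{m_j}-u_\infty)\bigr)+\int (u_{m_j}-u_\infty)(K*u_\infty)\to0 .
\]
The first term is bounded by $\|K*(u_{m_j}-u_\infty)\|_{L^1}\to0$, and the second tends to zero by weak-$*$ convergence against the $L^1$ function $K*u_\infty$. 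Therefore $E_\infty[u_{m_j}]\to E_\infty[u_\infty]$. Evenness of $K$ is not needed for this step.

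To conclude, fix any $v\in\mathcal A_M$. Minimality of $u_{m_j}$ and the uniform bound \eqref{eq:uniform_energy_bound} of Theorem~\ref{thm:mesa_energy_limit} give
\[
E_\infty[u_{m_j}]\le E_{m_j}[u_{m_j}]\le E_{m_j}[v]\le E_\infty[v]+\frac{\pi^2|\Omega|}{6m_j}.
\]
Letting $j\to\infty$ and using the continuity from the previous step yields $E_\infty[u_\infty]\le E_\infty[v]$. For $v\notin\mathcal A_M$ the inequality holds trivially since $E_\infty[v]=+\infty$. Hence $u_\infty$ is a minimizer of $E_\infty$.

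Alternatively, one could simply cite the fundamental theorem of $\Gamma$-convergence, combining the $\Gamma$-limit from Theorem~\ref{thm:mesa_energy_limit} with the equicoercivity supplied by the compactness step. The only real obstacle is the compactness of the convolution operator, which is what turns weak-$*$ convergence into convergence of the quadratic term.
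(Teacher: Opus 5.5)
Your proof is correct and is essentially the paper's argument. The paper extracts a weak-$*$ convergent subsequence from the compactness of $\mathcal A_M$ and then combines minimality with the $\Gamma$-liminf inequality and a recovery sequence; you unfold those two ingredients into their concrete form here (the bound $E_\infty\le E_m$ on $\mathcal A_M$ plus weak-$*$ continuity of $Q$, and the constant recovery sequence $v_m=v$ with \eqref{eq:uniform_energy_bound}). Your compactness argument for $u\mapsto K*u$, from bounded sets of $L^\infty$ via translation continuity of $K$ in $L^1$, is a harmless variant of the paper's Fourier-truncation proof of compactness on $L^2(\Omega)$.
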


\noindent
The proof is provided in \ref{app_subsec:minimizer_convergence_proof}.

We next investigate the shape of the minimizers of the limiting energy \(E_\infty\).
Minimizing \(E_\infty\) is equivalent to maximizing the nonlocal attraction energy
$
    \int_\Omega u(K*u)\,dx
$
over the capacity-constrained admissible set \(\mathcal A_M\).
In particular, we consider the case in which the Fourier coefficients are nonnegative, as for the one-dimensional tent kernel:
\[
    \widehat K(k)\ge0
    \qquad
    (k\in\mathcal K).
\]
Under this condition, the nonlocal attraction energy is a convex quadratic form.
The convexity of this energy implies that the minimization problem for the limiting energy \(E_\infty\) admits a characteristic-function-type minimizer.

\begin{proposition}[Existence of a characteristic-function-type minimizer]
\label{prop:characteristic_mesa}
Suppose that \(K\in L^1(\Omega)\) and that the nonlocal attraction energy satisfies
\[
    \int_\Omega u(K*u)\,dx\ge0
    \qquad
    \text{for all } u\in L^2(\Omega).
\]
Then the minimization problem for the limiting energy \(E_\infty\) admits a minimizer of the form
\[
    u_\infty=\mathbf 1_A,
    \qquad
    |A|=M
\]
for some measurable set \(A\subset\Omega\).
\end{proposition}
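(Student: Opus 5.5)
The plan is to prove this by convexity combined with Krein--Milman / Bauer maximum principle: a continuous convex functional attains its maximum over a weak-$*$ compact convex set at an extreme point, and the extreme points of $\mathcal A_M$ are characteristic functions.

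Set $Q[u]:=\int_\Omega u(K*u)\,dx$, so minimizing $E_\infty$ over $\mathcal A_M$ is equivalent to maximizing $Q$ there.

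\emph{Step 1: $Q$ is convex.} Since $K$ is even, $Q$ is a symmetric quadratic form. The hypothesis $Q[w]\ge0$ for all $w$ gives, for $u,v\in\mathcal A_M$ and $t\in[0,1]$,
\[
tQ[u]+(1-t)Q[v]-Q[tu+(1-t)v]=t(1-t)Q[u-v]\ge0 .
\]

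\emph{Step 2: compactness and semicontinuity.} $\mathcal A_M$ is convex and weak-$*$ compact in $L^\infty(\Omega)$, since it is a closed subset of a ball and both the constraint $0\le u\le1$ and the mass constraint pass to weak-$*$ limits. The convex quadratic $Q$ is weakly lower semicontinuous, but for maximization we need upper semicontinuity. We get it from $K\in L^1$: if $u_j\rightharpoonup^* u$, then $K*u_j\to K*u$ strongly in $L^1$. Indeed $K$ can be approximated in $L^1$ by continuous $K_\epsilon$, for which $K_\epsilon*u_j\to K_\epsilon*u$ pointwise and boundedly, while the errors are uniformly small because $\|u_j\|_\infty\le1$. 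Then $\int u_j(K*u_j)\to\int u(K*u)$, so $Q$ is weak-$*$ continuous on $\mathcal A_M$. Hence a maximizer exists; this is also implicit in Corollary~\ref{cor:mesa_minimizer_convergence}.

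\emph{Step 3: reduction to extreme points.} Let $S$ be the set of maximizers of $Q$ on $\mathcal A_M$. By Step~2, $S$ is nonempty and weak-$*$ compact. It is a face: if a maximizer equals $tu+(1-t)v$ with $u,v\in\mathcal A_M$, then convexity forces $Q[u]=Q[v]=\max$. By Krein--Milman, $S$ has an extreme point, and an extreme point of a face is an extreme point of $\mathcal A_M$.

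\emph{Step 4: extreme points are characteristic functions.} Suppose $u\in\mathcal A_M$ and $B=\{\delta<u<1-\delta\}$ has positive measure for some $\delta>0$. In a nonatomic measure space we can split $B$ into disjoint sets $B_1,B_2$ with $|B_1|=|B_2|>0$. Then $h=\delta(\mathbf 1_{B_1}-\mathbf 1_{B_2})$ has zero mean and $u\pm h\in\mathcal A_M$, so $u$ is not extreme. Therefore an extreme point takes values in $\{0,1\}$ a.e., i.e.\ $u=\mathbf 1_A$, and the mass constraint gives $|A|=M$.

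The main obstacle is the weak-$*$ continuity of $Q$ in Step~2, since a convex functional is generally only lower semicontinuous. It is handled by the compactness of convolution with an $L^1$ kernel on uniformly bounded sequences, as sketched above. Alternatively, Step~3 can avoid Krein--Milman by a direct argument: take any maximizer $u$ and, while $u$ is not $\{0,1\}$-valued, move along the zero-mean direction $h$. Since $t\mapsto Q[u+th]$ is convex, one endpoint of the admissible segment does not decrease $Q$. Iterating this requires an exhaustion or Zorn-type argument, which is why invoking Krein--Milman is cleaner.
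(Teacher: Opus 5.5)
Your proof is correct and is essentially the paper's. The paper cites the Bauer maximum principle, which your Step 3 reproves; note that the maximizer set $S$ need not be convex, so the precise fact to invoke is that a nonempty compact extremal subset of $\mathcal A_M$ contains an extreme point of $\mathcal A_M$, the key lemma in the proof of Krein--Milman. The paper then characterizes extreme points by the same splitting argument as your Step 4.

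The remaining differences are minor. For continuity of $Q$ the paper uses compactness of $T_K$ on $L^2$ rather than your $L^1$-approximation of $K$ by continuous kernels, and both work. The paper also explicitly invokes metrizability of the weak-$*$ topology on $\mathcal A_M$ (from separability of $L^1(\Omega)$) to upgrade sequential continuity to genuine continuity, which you also need for $S$ to be weak-$*$ compact. Finally, evenness of $K$ is not needed in your Step 1, since $tQ[u]+(1-t)Q[v]-Q[tu+(1-t)v]=t(1-t)Q[u-v]$ holds for any $K$.
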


\noindent
The proof is provided in \ref{app_subsec:characteristic_mesa_proof}.
Proposition~\ref{prop:characteristic_mesa} shows that the limiting problem admits at least one characteristic-function-type minimizer.
This variational structure is consistent with the mesa profiles observed numerically, which consist of saturated and background regions.

Furthermore, minimizers of the limiting energy satisfy the following threshold condition.

\begin{proposition}[Threshold condition]
\label{prop:mesa_threshold_condition}
Suppose that \(K\) is even and belongs to \(L^1(\Omega)\).
Let \(u_\infty\in\mathcal A_M\) be a minimizer of \(E_\infty\).
Then there exists a constant \(\lambda\in\mathbb R\) such that
\[
    K*u_\infty \ge \lambda
    \quad
    \text{a.e. on } \{0<u_\infty\le1\},
\]
and
\[
    K*u_\infty \le \lambda
    \quad
    \text{a.e. on } \{0\le u_\infty<1\}.
\]
Moreover, if the set on which \(0<u_\infty<1\) has positive measure, then
\[
    K*u_\infty=\lambda
\]
on that set.
\end{proposition}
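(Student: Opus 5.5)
The plan is to derive the threshold condition as the first-order (Euler--Lagrange / bathtub) optimality condition for maximizing the quadratic functional $Q[u]:=\int_\Omega u(K*u)\,dx$ over $\mathcal A_M$. Set $\phi:=K*u_\infty$. Since $K\in L^1(\Omega)$ and $u_\infty\in L^\infty(\Omega)$, Young's inequality gives $\phi\in L^\infty(\Omega)$. For any $v\in\mathcal A_M$, the segment $u_t=u_\infty+t(v-u_\infty)$, $t\in[0,1]$, stays in $\mathcal A_M$ because $\mathcal A_M$ is convex. Using the evenness of $K$, which gives $\int f(K*g)=\int g(K*f)$, we expand
\[
Q[u_t]=Q[u_\infty]+2t\int_\Omega (v-u_\infty)\phi\,dx+t^2Q[v-u_\infty].
\]
Minimality of $u_\infty$ for $E_\infty=-\tfrac a2 Q$, with $a>0$, forces $Q[u_t]\le Q[u_\infty]$ for all small $t>0$. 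Dividing by $t$ and letting $t\to0^+$ yields the variational inequality
\[
\int_\Omega (v-u_\infty)\,\phi\,dx\le 0\qquad\text{for all } v\in\mathcal A_M.
\]

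Next, I extract $\lambda$ from this inequality by mass-preserving transfer perturbations. Let $S_+:=\{u_\infty>0\}$ and $S_-:=\{u_\infty<1\}$, and define
\[
\lambda:=\operatorname*{ess\,inf}_{S_+}\phi .
\]
This is finite because $|S_+|>0$ (as $M>0$) and $\phi$ is bounded.

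The key claim is that $\phi\le\lambda$ a.e. on $S_-$. Suppose not. Then there are $\delta>0$ and a set $B\subset S_-$ of positive measure with $\phi\ge\lambda+2\delta$ on $B$. By the definition of the essential infimum, there is also a set $C\subset S_+$ of positive measure with $\phi\le\lambda+\delta$ on $C$. Shrinking further, I may take $B\subset\{u_\infty<1-\eta\}$ and $C\subset\{u_\infty>\eta\}$ for some $\eta>0$.

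I would then handle the possible overlap of $B$ and $C$. If both sets were forced to lie inside $\{0<u_\infty<1\}$ and to intersect, I would simply replace $C$ by $C\setminus B$, or $B$ by $B\setminus C$, since the value bands $\phi\ge\lambda+2\delta$ and $\phi\le\lambda+\delta$ are disjoint; hence $B\cap C=\emptyset$ automatically. With small $s>0$ I then take
\[
v=u_\infty+s\Bigl(\tfrac{\mathbf 1_B}{|B|}-\tfrac{\mathbf 1_C}{|C|}\Bigr).
\]
This $v$ has mass $M$ and satisfies $0\le v\le1$ once $s\le\eta\min(|B|,|C|)$. Moreover,
\[
\int_\Omega (v-u_\infty)\phi\,dx\ge s\bigl(\lambda+2\delta-(\lambda+\delta)\bigr)=s\delta>0,
\]
which contradicts the variational inequality.

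This proves $\phi\le\lambda$ on $S_-$. The inequality $\phi\ge\lambda$ on $S_+$ holds by the very definition of $\lambda$. On $\{0<u_\infty<1\}=S_+\cap S_-$ both inequalities hold, so $\phi=\lambda$ a.e. there. The statement's "on that set" is read in this a.e. sense; if continuity of $\phi$ is desired, it follows since the convolution of an $L^1$ function with an $L^\infty$ function is continuous on the torus.

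The main obstacle is the construction of the competitor. One must keep $v$ inside the box constraint $0\le v\le 1$ while preserving mass, which is why the level-set shrinkage by $\eta$ is needed. One must also guarantee that $B$ and $C$ have positive measure and are disjoint; the disjointness is handled by the separation of the $\phi$-value bands. The degenerate cases need a brief check. If $|S_-|=0$, then $u_\infty\equiv1$ (so $M=|\Omega|$), the second condition is vacuous, and any $\lambda$ not exceeding $\operatorname*{ess\,inf}\phi$ works.
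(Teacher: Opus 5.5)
Your proof is correct and follows essentially the same route as the paper. Both arguments first derive the variational inequality $\int_\Omega (v-u_\infty)\,(K*u_\infty)\,dx\le 0$ for all $v\in\mathcal A_M$ from convex combinations and the symmetry of $T_K$, then contradict it with a mass-preserving transfer from a low-potential set inside $\{u_\infty>0\}$ to a high-potential set inside $\{u_\infty<1\}$. The differences are cosmetic: you fix $\lambda$ as the essential infimum of $K*u_\infty$ over $\{u_\infty>0\}$ and normalize the indicators by $|B|$ and $|C|$, whereas the paper takes any $\lambda$ between the essential supremum and infimum and uses two sets of equal measure.
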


\noindent
The proof is provided in \ref{app_subsec:threshold_condition_proof}.
This threshold condition corresponds to the Euler--Lagrange condition for the limiting energy under the capacity and fixed-mass constraints, or, more precisely, to the associated variational inequality.
Regions in which the nonlocal potential exceeds the threshold are filled up to the capacity limit and form saturated regions, whereas regions in which the potential lies below the threshold form background regions with zero density.
If a region of intermediate density exists, the nonlocal potential equals the threshold throughout that region.
\subsection{Mesa pattern formation with increasing nonlinear exponent \(m\)}
\label{subsec:numerical_mesa}

In this subsection, we investigate how solutions of model \eqref{eq:main_model} transition from smooth aggregation patterns to mesa profiles as the nonlinear exponent \(m\) increases.
We use the tent kernel \eqref{eq:tent_kernel} and fix the mean density at \(\bar u=0.6\), the adhesion strength at \(a=4\), the domain size at \(L=1.5\), and the sensing radius at \(R=1\).
Numerical simulations are performed by varying only the nonlinear exponent over \(m=2^2,2^4,2^6,2^8,2^{14}\).
The numerical solution profiles after sufficiently long evolution are shown in Fig.~\ref{fig:mesa_varying_m}.

\begin{figure*}[tbp]
    \centering
    \includegraphics[width=0.78\linewidth]{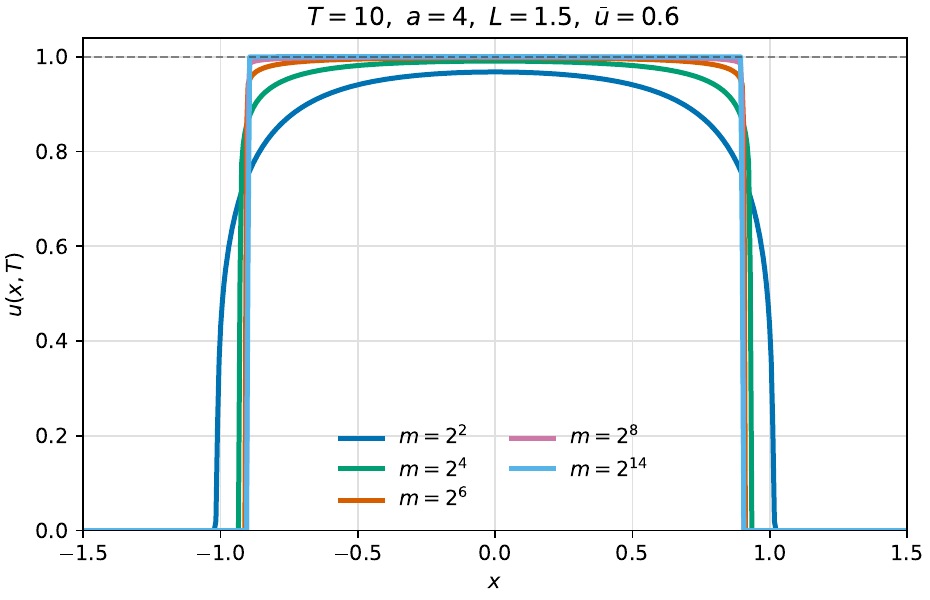}
    \caption{
    Mesa pattern formation with increasing nonlinear exponent \(m\).
    The parameters are \(\bar u=0.6\), \(a=4\), \(L=1.5\), and \(R=1\), and the profiles \(u(x,T)\) are shown at \(T=10\).
    The dashed line indicates the capacity limit \(u=1\).
    }
    \label{fig:mesa_varying_m}
\end{figure*}

For \(m=2^2\), the resulting aggregate has a relatively smooth, mound-shaped profile, and its maximum density remains below the capacity limit \(u=1\).
This occurs because the volume-filling effect is relatively weak, and aggregation driven by nonlocal attraction is balanced by the smoothing effect of nonlinear diffusion.

As \(m\) increases, \(m u^m\) tends to zero for every fixed \(0\le u<1\).
Consequently, in density regions away from the capacity limit, local dispersal generated by the population pressure becomes weaker, and aggregation driven by nonlocal adhesion becomes relatively dominant.
The numerical simulations therefore show that the density rises toward the capacity limit and that the solution approaches a mesa profile with an almost flat plateau and steep transition layers along its edges.

Mass conservation also strongly constrains the profile in the mesa limit.
For the parameter setting used in Fig.~\ref{fig:mesa_varying_m}, the domain length is \(2L=3\), and the mean density is \(\bar u=0.6\), giving a total mass of \(M=1.8\).
Thus, for a limiting profile consisting of a saturated region with \(u\simeq1\) and a background region with \(u\simeq0\), the total width of the saturated region is approximately \(1.8\).
The flat plateau and sharp interfaces observed for large \(m\) therefore reflect both the capacity constraint and mass conservation.

It should be emphasized that Theorem~\ref{thm:mesa_energy_limit} and Proposition~\ref{prop:characteristic_mesa} concern energy minimization problems.
They do not assert that a time-dependent solution at finite time necessarily reaches a global minimizer.

\FloatBarrier

\section{Conclusions}
\label{sec:conclusion}

In this paper, we analyzed pattern formation from a homogeneous steady state in a one-component nonlocal adhesion model with population pressure and degenerate mobility.
Through linear stability analysis, we derived the instability threshold and the density dependence of the fastest-growing mode, and identified the mechanism by which the selected wavenumber shifts toward lower wavenumbers as the mean density increases.

We also performed a weakly nonlinear analysis near the critical adhesion strength and showed that the amplitude of the critical mode satisfies a Stuart--Landau equation.
The Landau coefficient was expressed explicitly as a function of the mean density \(\bar u\), the nonlinear exponent \(m\), and the second-harmonic response ratio
\(\kappa=\widehat K(2k_c)/\widehat K(k_c)\), allowing the critical bifurcation to be classified as supercritical or subcritical.
These results show that the adhesion strength \(a\) determines the onset of instability, whereas the bifurcation type itself is governed primarily by \(\bar u\), \(m\), and \(\kappa\).

Furthermore, we showed that a large nonlinear exponent \(m\) promotes a transition of the critical bifurcation toward subcriticality.
For the tent kernel, the system size \(L\) controls the bifurcation type through the second-harmonic response ratio.
Numerical bifurcation analysis confirmed a supercritical bifurcation for \(L=1.9\) and a subcritical bifurcation with a fold point for \(L=2.1\).

Finally, we examined the mesa limit as \(m\to\infty\) from the perspective of the energy structure.
The local entropy term vanishes under the capacity constraint \(0\le u\le1\), and the limiting energy becomes a capacity-constrained nonlocal attraction energy.
Under suitable assumptions on the kernel, the limiting problem admits a characteristic-function-type minimizer.
This variational structure is consistent with the mesa profiles observed for large \(m\), which consist of saturated regions with \(u\simeq1\) and background regions with \(u\simeq0\).

Although this paper focused primarily on purely attractive kernels representing adhesion, actual cell-cell interactions may switch among adhesion, repulsion, and attraction depending on distance.
Repulsion--attraction-type and adhesion--repulsion-type interactions are also important \cite{CarrilloMurakawaSatoWang2025}.
For such kernels, the sign structures of the Fourier and higher-harmonic responses may change, and it remains to determine how the mode-selection mechanism and bifurcation classification developed here must be modified.
Moreover, on multidimensional domains, several critical modes may emerge simultaneously.
Future work should therefore include the derivation of multimode amplitude equations describing spot, stripe, and network-like patterns, as well as analyses of coarsening, metastability, and the persistence of multiple mesas for large \(m\).

A particularly important direction for future research is the analysis of two-component models.
Although the one-component model provides a natural starting point for understanding the fundamental effects of population pressure, degenerate mobility, and nonlocal interactions, at least two cell populations are required to describe cell-sorting phenomena driven by cell-cell adhesion.
In a two-component system, differences between homotypic and heterotypic cell-cell adhesion can generate a wide variety of patterns, including destabilization of mixed states, interface formation, complete segregation, engulfment, and partial engulfment.
Extending the linear stability analysis, weakly nonlinear analysis, and mesa-limit framework developed in this paper to two-component systems is therefore important from both mathematical and biological perspectives.

\section*{Declaration of competing interest}
The authors have
declared that no competing interests exist.

\section*{CRediT authorship contribution statement}

Shimpei Makida: Methodology, Formal analysis, Investigation,
Validation, Writing -- original draft, Writing -- review \& editing.

Hideki Murakawa: Conceptualization, Methodology, Software,
Formal analysis, Investigation, Validation, Visualization,
Supervision, Funding acquisition, Writing -- original draft,
Writing -- review \& editing.

\section*{Data and code availability}
The source code used in this study is openly available from Zenodo at
\url{https://doi.org/10.5281/zenodo.21768367}.

\section*{Declaration of generative AI and AI-assisted technologies in the manuscript preparation process}

During the preparation of this work, the authors used ChatGPT and Codex (OpenAI), and Claude and Claude Code (Anthropic), to assist with Japanese-to-English translation, language editing, and consistency checking. Codex (OpenAI) was also used to assist in generating the computational code underlying the bifurcation diagrams in Figure 4 and to review all numerical codes employed in this study. All AI-assisted text, code, and review results were carefully examined, edited, tested, and verified by the authors. The authors take full responsibility for the content of the published article.

\section*{Acknowledgment}
This work was partially supported by JSPS KAKENHI Grant Numbers 24H00188 and 21KK0044. 
JSPS had no role in the study design, analysis and interpretation of the results, preparation of the manuscript, or decision to submit the article for publication.

%============================================================
\bibliographystyle{elsarticle-num}
\bibliography{references_physicad}

\appendix
\renewcommand{\thetheorem}{\Alph{section}.\arabic{theorem}}

\section{Fourier responses of representative attractive kernels}
\label{app:kernels}

Although the main text focuses primarily on the tent kernel, for comparison, this appendix summarizes the radial profiles, adhesion-strength functions, Fourier responses, and second-harmonic response ratios of representative purely attractive kernels.
In what follows, we restrict our attention to one spatial dimension and assume that the base kernel \(K_0\) is even, with \(K_0(x)=K_0(r)\), where \(r=|x|\).

For compactly supported kernels, as in the main text, we nondimensionalize the sensing radius and set \(R=1\). Each kernel is normalized so that
\begin{equation}
    \int_{\mathbb R}K_0(x)\,dx=1
    \label{eq:app_kernel_normalization}
\end{equation}
holds. The adhesion-strength function is defined by
\begin{equation}
    \omega(r):=-\frac{d}{dr}K_0(r)
    \qquad (r>0).
    \label{eq:app_omega_def}
\end{equation}
When \(K_0\) decreases with distance, \(\omega(r)\ge0\) represents the strength of adhesion exerted by cells located at distance \(r\). The Fourier response is given by
\begin{equation}
    \widehat K(k)
    =
    \int_{\mathbb R}K_0(x)e^{-ikx}\,dx
    =
    2\int_0^\infty K_0(r)\cos(kr)\,dr.
    \label{eq:app_fourier_def}
\end{equation}
As explained in Subsection~\ref{subsec:model_formulation}, this coincides with the Fourier coefficient of the periodized kernel evaluated at an admissible wavenumber.
The quantity relevant to the weakly nonlinear analysis is the second-harmonic response ratio at the critical wavenumber \(k_c\), defined by
\begin{equation}
    \kappa(k_c)
    :=
    \frac{\widehat K(2k_c)}{\widehat K(k_c)}.
    \label{eq:app_kappa_def}
\end{equation}
In the formulas below, \(\widehat K(0)=1\) is understood by continuous extension. The ratio \(\kappa(k_c)\) is defined whenever \(\widehat K(k_c)\ne0\) and is interpreted by continuous extension at removable singularities.

Table~\ref{tab:kernels_fourier_response} summarizes \(K_0(r)\), \(\omega(r)\), \(\widehat K(k)\), and \(\kappa(k_c)\) for six representative purely attractive kernels.
For the compactly supported kernels---the tent, parabolic, quadratic tent, and raised-cosine kernels---the sensing radius is set to \(R=1\).
The Gaussian and screened Poisson kernels do not have a finite sensing radius; instead, \(\sigma\) and \(\ell\), respectively, represent their interaction lengths.

\begin{table*}[t]
\small
\centering

\caption{
Radial profiles \(K_0(r)\), adhesion-strength functions
\(\omega(r)=-K_0'(r)\), Fourier responses \(\widehat K(k)\), and
second-harmonic response ratios
\(\kappa(k_c)=\widehat K(2k_c)/\widehat K(k_c)\)
for representative one-dimensional purely attractive kernels.
}
\label{tab:kernels_fourier_response}
\renewcommand{\arraystretch}{2.2}
\setlength{\tabcolsep}{5pt}
\resizebox{\linewidth}{!}{%
\begin{tabular}{c|c|c|c|c}
\hline
Name
&
\(K_0(r)\)
&
\(\omega(r)=-K_0'(r)\)
&
\(\widehat K(k)\)
&
\(\kappa(k_c)\)
\\
\hline\hline
Tent
&
\((1-r)_+\)
&
\(\mathbf 1_{(0,1)}(r)\)
&
\(\displaystyle \frac{2(1-\cos k)}{k^2}\)
&
\(\displaystyle \cos^2\!\left(\frac{k_c}{2}\right)\)
\\
\hline
Parabolic
&
\(\displaystyle \frac34\,(1-r^2)_+\)
&
\(\displaystyle \frac32\,r\,\mathbf 1_{(0,1)}(r)\)
&
\(\displaystyle \frac{3(\sin k-k\cos k)}{k^3}\)
&
\(\displaystyle
\frac{\sin(2k_c)-2k_c\cos(2k_c)}
{8\,(\sin k_c-k_c\cos k_c)}
\)
\\
\hline
Quadratic tent
&
\(\displaystyle \frac32\,(1-r)_+^2\)
&
\(\displaystyle 3(1-r)\,\mathbf 1_{(0,1)}(r)\)
&
\(\displaystyle \frac{6(k-\sin k)}{k^3}\)
&
\(\displaystyle
\frac{2k_c-\sin(2k_c)}
{8\,(k_c-\sin k_c)}
\)
\\
\hline
Raised cosine
&
\(\displaystyle \frac12\bigl(1+\cos(\pi r)\bigr)\mathbf 1_{[0,1]}(r)\)
&
\(\displaystyle \frac{\pi}{2}\sin(\pi r)\,\mathbf 1_{(0,1)}(r)\)
&
\(\displaystyle
\frac{\pi^2\sin k}{k\,(\pi^2-k^2)}
\)
&
\(\displaystyle
\cos k_c\,\frac{\pi^2-k_c^2}{\pi^2-4k_c^2}
\)
\\
\hline
Gaussian
&
\(\displaystyle
\frac{1}{\sqrt{\pi}\,\sigma}
\exp\!\left(-\frac{r^2}{\sigma^2}\right)
\)
&
\(\displaystyle
\frac{2r}{\sqrt{\pi}\,\sigma^3}
\exp\!\left(-\frac{r^2}{\sigma^2}\right)
\)
&
\(\displaystyle
\exp\!\left(-\frac{\sigma^2k^2}{4}\right)
\)
&
\(\displaystyle
\exp\!\left(-\frac{3\sigma^2k_c^2}{4}\right)
\)
\\
\hline
Screened Poisson
&
\(\displaystyle
\frac{1}{2\ell}\,e^{-r/\ell}
\)
&
\(\displaystyle
\frac{1}{2\ell^2}\,e^{-r/\ell}
\)
&
\(\displaystyle
\frac{1}{1+\ell^2k^2}
\)
&
\(\displaystyle
\frac{1+\ell^2k_c^2}{1+4\ell^2k_c^2}
\)
\\
\hline
\end{tabular}%
}
\end{table*}

For the tent kernel used primarily in the main text,
\(\omega(r)=\mathbf 1_{(0,1)}(r)\).
Thus, adhesion acts with the same strength independently of distance within the sensing radius \(R=1\), whereas the interaction vanishes outside this range.
In this sense, the tent kernel is the simplest finite-range purely attractive kernel.
By contrast, for the parabolic, quadratic tent, and raised-cosine kernels, \(\omega(r)\) depends on the distance.
Although these kernels have the same finite sensing range, they have different distributions of adhesion strength with respect to distance, and these differences are reflected in their Fourier responses \(\widehat K(k)\) and second-harmonic response ratios \(\kappa(k_c)\).

The Gaussian and screened Poisson kernels do not have a finite sensing radius and instead represent interactions with unbounded support that decay with distance.
For comparison, if \(\sigma=R/3\) is chosen for the Gaussian kernel, then, for \(R=1\),
\[
    \kappa_{\mathrm{gauss}}(k_c)
    =
    \exp\!\left(-\frac{k_c^2}{12}\right).
\]
The screened Poisson kernel is the fundamental solution \(K_{\mathrm{sp}}\) of the elliptic equation
\[
    -d\Delta w+w=u,
    \qquad d=\ell^2,
\]
where \(\ell=\sqrt d\) is the interaction length, so that
\(w=K_{\mathrm{sp}}*u\).
This provides a representative connection between nonlocal adhesion models and elliptic Keller--Segel-type models.
In one dimension, the fundamental solution is
\(K_0(r)=\tfrac{1}{2\ell}e^{-r/\ell}\), as listed in
Table~\ref{tab:kernels_fourier_response}.
It is positive, monotonically decreasing, and satisfies
\eqref{eq:app_kernel_normalization}.
Its Fourier response is the rational function
\(\widehat K(k)=1/(1+\ell^2k^2)\), and its second-harmonic response ratio is
\[
    \kappa_{\mathrm{sp}}(k_c)
    =
    \frac{1+\ell^2k_c^2}{1+4\ell^2k_c^2}.
\]
In particular,
\(\kappa_{\mathrm{sp}}\to1\) as \(\ell k_c\to0\), corresponding to the long-wavelength regime, whereas
\(\kappa_{\mathrm{sp}}\to1/4\) as \(\ell k_c\to\infty\), corresponding to the short-wavelength regime.
Because the Fourier response
\(\widehat K_{\mathrm{sp}}(k)=1/(1+\ell^2k^2)\)
decreases with increasing wavenumber \(k\), high-wavenumber components are more strongly attenuated.
Consequently, when the fundamental wavelength is sufficiently long relative to the interaction length \(\ell\), both the fundamental and second-harmonic modes lie in the low-wavenumber regime, where the difference between their attenuation rates is small, and \(\kappa_{\mathrm{sp}}\) is close to \(1\).
By contrast, when the fundamental wavelength becomes short relative to \(\ell\), the second harmonic is attenuated more strongly than the fundamental mode, and \(\kappa_{\mathrm{sp}}\) decreases toward \(1/4\).

In two dimensions, the fundamental solution can be written in terms of the modified Bessel function \(\mathrm K_0\) as
\(K_{\mathrm{sp}}(x)=\frac{1}{2\pi d}\mathrm K_0(|x|/\ell)\).
Although it has a logarithmic singularity at the origin, its Fourier response retains the same form,
\(\widehat K(k)=1/(1+\ell^2|k|^2)\), and is therefore convenient for both linear stability and weakly nonlinear analyses.

Thus, differences in kernel shape are encoded in the second-harmonic response ratio \(\kappa(k_c)\).
When the domain size is sufficiently large relative to the sensing radius, the critical wavenumber \(k_c\) becomes small.
The fundamental and second-harmonic modes then both lie in the low-wavenumber regime, and \(\kappa\) approaches \(1\) for many kernels, indicating a strong second-harmonic response.
By contrast, as the domain size approaches the sensing radius, the second harmonic enters the attenuation range of the kernel, causing \(\kappa\) to decrease.
Therefore, the bifurcation criterion \eqref{eq:B_def} obtained from the weakly nonlinear analysis in the main text applies in the same form to all the kernels listed in Table~\ref{tab:kernels_fourier_response}.
The system size and kernel shape determine whether the critical bifurcation is supercritical or subcritical through \(\kappa(k_c)\).

\section{Details of the weakly nonlinear analysis}
\label{app:weakly_nonlinear_details}

In this appendix, we present the derivation of the Stuart--Landau equation and the Landau coefficient used in Section~\ref{sec:weakly_nonlinear}. We consider a one-dimensional periodic domain and restrict our attention to the case in which a simple critical mode \(k_c\) is selected.

We first restate the problem setting. Model \eqref{eq:main_model} is given by
\begin{equation}
    \partial_t u
    =
    \partial_{xx}P(u)
    -
    a\nabla \cdot\left(g(u)\partial_x(K*u)\right),
    \label{eq:app_pde}
\end{equation}
where
$
    P(u)=\frac{m}{m+1}u^{m+1},
    \quad
    g(u)=u(1-u^m)
$.
For a parameter \(\varepsilon>0\), we introduce the slow time
\(\tau=\varepsilon^2t\), perturb the adhesion strength near the critical value \(a_c\) as
$
    a=a_c+\varepsilon^2a_2,
$
and expand the solution of the model as
\[
    u=\bar u+\varepsilon u_1(x,\tau)+\varepsilon^2u_2(x,\tau)+\varepsilon^3u_3(x,\tau)+O(\varepsilon^4).
\]
We define the Taylor coefficients of \(P\) and \(g\) by
\[
    D_1=P'(\bar u),\quad
    D_2=\frac12P''(\bar u),\quad
    D_3=\frac16P'''(\bar u),
\]
\[
    G_0=g(\bar u),\quad
    G_1=g'(\bar u),\quad
    G_2=\frac12g''(\bar u).
\]

The left-hand side of model \eqref{eq:main_model} then expands as
\[
    \partial_{t}u=\varepsilon^{3}(\partial_{\tau} u_1)+\cdots.
\]
The degenerate diffusion term is expanded as
\begin{align*}
    \partial_{xx} P(u)
    &=
    \varepsilon D_1 \partial_{xx} u_1
    \\
    &\quad
    +\varepsilon^2
    \left(
        D_1 \partial_{xx} u_2
        +D_2 \partial_{xx}(u_1^2)
    \right)
    \\
    &\quad
    +\varepsilon^3
    \left(
        D_1 \partial_{xx} u_3
        +2D_2 \partial_{xx} \left(u_1 u_2\right)
        +D_3 \partial_{xx} \left(u_1^3\right)
    \right),
\end{align*}
whereas the nonlocal term can be expanded as
\begin{align*}
&-a\nabla \cdot\left(g(u)\partial_x(K*u)\right)
\\
&\quad
=-\varepsilon a_c G_0 \partial_{xx} K\ast u_1
\\
&\quad
-\varepsilon^2 a_c
\left[
    G_0 \partial_{xx} K \ast u_2
    \right.
\left.
    +G_1 \nabla \cdot \left(u_1 \nabla K \ast u_1\right)
\right]
\\
&\quad
-\varepsilon^3
\left[
    a_c G_0 \partial_{xx} K\ast u_3
    \right.
\left.
    +a_c G_1 \nabla \cdot (u_1 \nabla K\ast u_2)
\right.
\\
&\quad\left.
    +a_c G_1 \nabla \cdot (u_2 \nabla K \ast u_1)
\right.
\left.
    +a_c G_2 \nabla \cdot (u_1^2 \nabla K \ast u_1)
    \right.
\\
&\hspace{34mm}\left.
    +a_2 G_0 \partial_{xx} K\ast u_1
\right].
\end{align*}

\subsection*{Terms of order \(\varepsilon\)}

We introduce the operator
$
    \mathcal L_c v
    =
    D_1\partial_{xx}v
    -
    a_cG_0\partial_{xx}(K*v).
    \label{eq:app_Lc}
$
Collecting the terms of order \(\varepsilon\) gives
$
    \mathcal L_cu_1=0.
$
Under the assumption of a simple critical mode, \(u_1\) can be written as
$
    u_1(x,\tau)
    =
    A(\tau)e^{ik_cx}
    +
    \overline{A(\tau)}e^{-ik_cx}.
    \label{eq:app_u1}
$
Substituting this expression into the resulting equation and equating the coefficient of \(e^{ik_cx}\), we obtain the criticality condition
\begin{equation}
    a_cG_0\widehat K(k_c)=D_1.
    \label{eq:app_critical_condition}
\end{equation}
\subsection*{Terms of order \(\varepsilon^2\)}

Collecting the terms of order \(\varepsilon^2\), we obtain
\begin{equation}
    \mathcal L_cu_2
    =
    -
    \left[
        D_2\partial_{xx}(u_1^2)
        -
        a_cG_1\nabla \cdot
        \left(
            u_1\partial_x(K*u_1)
        \right)
    \right].
    \label{eq:app_order2}
\end{equation}
Substituting
\(u_1=Ae^{ik_cx}+\overline A e^{-ik_cx}\), we find that the mean component on the right-hand side vanishes owing to the divergence-form structure, leaving only the second harmonics \(\pm2k_c\). Therefore, \(u_2\) can be written in terms of a real coefficient \(U_2\) as
\begin{equation}
    u_2
    =
    U_2A^2e^{2ik_cx}
    +
    {U_2}\,\overline A^2e^{-2ik_cx}.
    \label{eq:app_u2}
\end{equation}
Substituting this expression for \(u_2\) into \eqref{eq:app_order2} and equating the coefficients of \(e^{2ik_cx}\), we obtain
\[
    -4k_c^2
    \left[
        D_1-a_cG_0\widehat K(2k_c)
    \right]U_2
    =
    2k_c^2
    \left[
        2D_2-a_cG_1\widehat K(k_c)
    \right].
\]
Using the criticality condition \eqref{eq:app_critical_condition} and the second-harmonic response ratio
$
    \kappa=\frac{\widehat K(2k_c)}{\widehat K(k_c)},
$
we obtain
\begin{equation}
    U_2
    =
    \frac{1}{2D_1(\kappa-1)}
    \left(
        2D_2-\frac{D_1G_1}{G_0}
    \right).
    \label{eq:app_U2}
\end{equation}

\subsection*{Terms of order \(\varepsilon^3\)}

At order \(\varepsilon^3\), contributions arise from the slow-time derivative, the perturbation \(a_2\) of the adhesion strength, and the nonlinear interactions between \(u_1\) and \(u_2\). We obtain
\begin{equation}
    \mathcal L_cu_3
    =
    \partial_\tau u_1
    -
    N_3
    +
    a_2G_0\partial_{xx}(K*u_1).
    \label{eq:app_order3}
\end{equation}
Here, the cubic nonlinear term \(N_3\) is given by
\begin{align}
    N_3
    =
    &\ \partial_{xx}
    \left(
        2D_2u_1u_2+D_3u_1^3
    \right)
    \notag\\
    &-
    a_c\nabla \cdot
    \left[
        G_1u_2\partial_x(K*u_1)
    \right.\notag\\
    &\hspace{23mm}\left.
        +G_1u_1\partial_x(K*u_2)
        +G_2u_1^2\partial_x(K*u_1)
    \right].
    \label{eq:app_N3}
\end{align}

The growth rate vanishes for the critical mode. Consequently, when \(u_3\) in \eqref{eq:app_order3} is expanded as a Fourier series, the coefficient of \(e^{ik_cx}\) on the left-hand side vanishes. The coefficient of \(e^{ik_cx}\) on the right-hand side must therefore also vanish.

First, the \(e^{ik_cx}\) component of the adhesion-strength perturbation term
$
    a_2G_0\partial_{xx}(K*u_1)
$
is
\[
    -a_2G_0\widehat K(k_c)k_c^2A.
\]
We next calculate the \(e^{ik_cx}\) component of \(-N_3\). The contribution from the diffusion term is
\[
    k_c^2
    \left(
        2D_2U_2+3D_3
    \right)
    |A|^2A,
\]
whereas the contribution from the nonlocal adhesion term is
\[
    -k_c^2\frac{D_1}{G_0}
    \left[
        G_1U_2(2\kappa-1)+G_2
    \right]
    |A|^2A.
\]
Therefore, the fundamental-mode component of \(-N_3\) is
\[
    -k_c^2
    \left[
        -2D_2U_2
        -3D_3
        +
        \frac{D_1G_1}{G_0}U_2(2\kappa-1)
        +
        \frac{D_1G_2}{G_0}
    \right]
    |A|^2A.
\]

Combining these results, we obtain the Stuart--Landau equation
$
    \frac{dA}{d\tau}=\sigma A-c_{\mathrm L}|A|^2A,
$
where \(\sigma\) and \(c_{\mathrm L}\) are defined by
\begin{equation}
    \sigma=a_2G_0\widehat K(k_c)k_c^2,
\end{equation}
and
\begin{equation}
    c_{\mathrm L}
    =
    k_c^2
    \left[
        U_2
        \left(
            2D_2
            -
            \frac{D_1G_1}{G_0}(2\kappa-1)
        \right)
        +
        3D_3
        -
        \frac{D_1G_2}{G_0}
    \right].
    \label{eq:app_landau}
\end{equation}
This agrees with \eqref{eq:landau_constant} in the main text.
\section{Numerical method}
\label{app:numerical_scheme}

This appendix describes the numerical method used for the time-dependent simulations in
Sections~\ref{sec:linear_stability} and~\ref{sec:mesa_limit}.
In model~\eqref{eq:main_model}, the nonlinear diffusion term becomes highly stiff
for large values of the nonlinear exponent \(m\). An explicit treatment of this term would require extremely small time steps.

We apply the idea underlying Murakawa's efficient linear scheme for nonlinear diffusion problems
\cite{MurakawaEfficientLS} to equation~\eqref{eq:main_model}, which contains a nonlocal adhesion term.
In this approach, the nonlinear diffusion is treated linearly and implicitly in terms of an auxiliary variable,
thereby reducing each time step to the solution of a single linear equation.
The nonlocal adhesion field is evaluated explicitly from the density at the previous time step, whereas the adhesive flux is treated implicitly using the auxiliary variable. This makes it possible to perform fast and stable time-dependent simulations
even for large \(m\), while avoiding nonlinear implicit solvers based on Newton's method.

\subsection{Time-discrete scheme}
\label{app_subsec:time_scheme}

Model \eqref{eq:main_model} can be rewritten as
\begin{equation}
    \frac{\partial u}{\partial t}
    =
    \Delta \beta(u)
    -
    a\nabla \cdot
    \left[
        \left(
            u-\frac{m+1}{m}\beta(u)
        \right)
        \nabla(K*u)
    \right],
    \label{eq:app_pde_beta_form}
\end{equation}
where
$
    \beta(u)
    :=
    \frac{m}{m+1}u^{m+1}$.
The basic idea of the linear scheme \cite{MurakawaEfficientLS} is to use the relation
$
    \frac{\partial u}{\partial t}
    =
    \frac{1}{\beta'(u)}\frac{\partial }{\partial t}\beta(u)
$
and introduce an auxiliary variable $w$ representing \(\beta(u)\).

Let \(u^n\) denote the time-discrete approximate solution at time \(t^n\).
\begin{equation}
\mu^n = \frac{1}{\beta'(u^n) + \varepsilon_\mu}.
\label{eq:mu_equation}
\end{equation}
\begin{align}
\mu^n \frac{w^{n+1} - \beta(u^n)}{\Delta t}
&= \Delta w^{n+1} \notag \\
&\hspace{-7mm} - a \nabla \cdot \biggl[
\left( u^{n+1} - \frac{m+1}{m} w^{n+1} \right)
\nabla(K * u^n) \biggr]. \label{eq:w_ls}
\end{align}
\begin{equation}
u^{n+1} - u^n
= \mu^n \left( w^{n+1} - \beta(u^n) \right).
\label{eq:u_ls}
\end{equation}
Here, \(\varepsilon_\mu>0\) is a small positive constant introduced to avoid division by zero due to \(\beta'(u)=0\) at \(u=0\).
In the implementation used in this study, we set
$
    \varepsilon_\mu=10^{-10}
$.
Substituting equation \eqref{eq:u_ls} into equation \eqref{eq:w_ls} yields a closed linear elliptic equation for $w^{n+1}$ and an update formula for $u^{n+1}$ that involves only direct substitution. Thus, we obtain the following linear scheme:
\begin{align}
\mu^n =& 1 / (\beta'(u^n) + \varepsilon_\mu), \label{eq:app_density_update} \\
\mu^n w^{n+1} &- \Delta t \Delta w^{n+1} + a \Delta t \nabla \cdot \left[ \left(\mu^n - \frac{m+1}{m}\right)w^{n+1} \nabla(K * u^n) \right] \notag \\
=& \mu^n \beta(u^n) - a \Delta t \nabla \cdot \left[ (u^n - \mu^n\beta(u^n)) \nabla(K * u^n) \right], \label{eq:app_time_scheme} \\
u^{n+1}  =&  u^n + \mu^n (w^{n+1} - \beta(u^n)). \label{eq:u_equation}
\end{align}

\subsection{Fully discrete scheme with spatial discretization}
\label{app_subsec:full_scheme}

\noeqref{eq:app_time_scheme}
Consider the one-dimensional periodic domain \(\Omega=[-L,L)\).
We discretize the time-discrete scheme \eqref{eq:app_density_update}--\eqref{eq:u_equation}
in space using a conservative finite-volume method.
Partitioning the domain uniformly into \(N_X\) cells, the spatial mesh size is $    \Delta x=\frac{2L}{N_X}$, and
\[
    x_i=-L+\left(i+\frac12\right)\Delta x,
    \qquad
    i=0,\ldots,N_X-1
\]
denotes the center of the \(i\)-th cell. Indices are interpreted periodically in accordance with the periodic boundary conditions.

Let \(u_i^n\) denote the value at the \(i\)-th cell center and \(w_i^{n+1}\) the corresponding auxiliary unknown. Set
\[
    \beta_i^n:=\beta(u_i^n),
    \qquad
    \mu_i^n:=\frac{1}{\beta'(u_i^n)+\varepsilon_\mu},
    \qquad
    c_m:=\frac{m+1}{m}
\]
and define
\[
    A_i^n:=\mu_i^n-c_m,
    \qquad
    B_i^n:=u_i^n-\mu_i^n\beta_i^n.
\]

The convolution \(K*u^n\) is evaluated as a periodic convolution using the FFT
(fast Fourier transform).
Rather than sampling the kernel in physical space, we use the analytical Fourier response
\(\widehat K(k)\) given in \ref{app:kernels}.

More specifically, denoting the discrete Fourier transform by \(\mathcal F_h\), let
\[
    \widehat u^n(k_\ell)
    :=
    \mathcal F_h[u^n](k_\ell),
\]
where the discrete wavenumbers are
\[
    k_\ell=\frac{\pi}{L}\ell,
    \qquad
    \ell=-\frac{N_X}{2},\ldots,\frac{N_X}{2}-1.
\]
For each discrete wavenumber \(k_\ell\), we set
\[
    \widehat{(K*u^n)}(k_\ell)
    =
    \widehat K(k_\ell)\widehat u^n(k_\ell)
\]
and obtain the values in physical space by the inverse discrete Fourier transform:
\[
    K*u^n
    =
    \mathcal F_h^{-1}
    \left[
        \widehat K\widehat u^n
    \right].
\]
Thus, the Fourier response used in the numerical computations coincides with
\(\widehat K\) used in the linear stability and weakly nonlinear analyses.

We define the adhesion velocity at the cell interface \(x_{i+1/2}\) by
\begin{equation}
    V_{i+1/2}^n
    =
    a\frac{(K*u^n)_{i+1}-(K*u^n)_i}{\Delta x}.
    \label{eq:app_linear_velocity}
\end{equation}
Based on the direction of this velocity, the upwind value is defined by
\[
    X_{\mathrm{up},i+1/2}
    =
    \begin{cases}
        X_i,
        & V_{i+1/2}^n\ge0,\\
        X_{i+1},
        & V_{i+1/2}^n<0,
    \end{cases}
\]
where \(X\) denotes
\(A^n\), \(B^n\), or \(w^{n+1}\).

The discrete flux for the auxiliary variable \(w^{n+1}\) is defined by
\begin{equation}
    J_{i+1/2}^{n+1}
    =
    -
    \frac{w_{i+1}^{n+1}-w_i^{n+1}}{\Delta x}
    +
    V_{i+1/2}^n
    \left(
        A_{\mathrm{up},i+1/2}^n w_{\mathrm{up},i+1/2}^{n+1}
        +
        B_{\mathrm{up},i+1/2}^n
    \right).
    \label{eq:app_linear_flux}
\end{equation}
The finite-volume discretization of \eqref{eq:app_time_scheme} is then given by
\begin{equation}
    \mu_i^n
    \frac{w_i^{n+1}-\beta_i^n}{\Delta t}
    +
    \frac{J_{i+1/2}^{n+1}-J_{i-1/2}^{n+1}}{\Delta x}
    =
    0,
    \qquad
    i=0,\ldots,N_X-1.
    \label{eq:app_fully_discrete}
\end{equation}
This is a linear equation for \(w^{n+1}=(w_0^{n+1},\ldots,w_{N_X-1}^{n+1})\).
Owing to the periodic boundary conditions, the resulting linear system is cyclic tridiagonal.
In the implementation used in this study, this linear system is solved using a direct solver for cyclic tridiagonal matrices.
The linear system itself can be solved with a computational cost of \(O(N_X)\).
However, because the convolution is evaluated using the FFT at each time step,
the computational cost per step is dominated by the \(O(N_X\log N_X)\) cost of the convolution.

After solving the linear system, the density is updated according to
\begin{equation}
    u_i^{n+1}
    =
    u_i^n+\mu_i^n(w_i^{n+1}-\beta_i^n).
    \label{eq:app_density_update_discrete}
\end{equation}
Summing \eqref{eq:app_fully_discrete} over all cells, the flux differences cancel by periodicity. Hence,
\[
    \Delta x\sum_i u_i^{n+1}
    =
    \Delta x\sum_i u_i^n
\]
holds up to roundoff error. Thus, the discrete mass is conserved by construction.

In the computations presented in this study, we use a fixed time step $\Delta t=10^{-4}$ and a fixed number of spatial cells $N_X=512$.

\subsection{Initial condition}
\label{app_subsec:numerical_initial}

In the numerical simulations, we use no random numbers for the initial perturbation,
but instead employ a deterministic initial condition that is reproducible in all computations.
On the one-dimensional periodic domain \(\Omega=[-L,L)\), with mean density \(\bar u\), the initial condition is prescribed by
\begin{equation}
    u_0(x)
    =
    \bar u
    \left[
        1+
        \eta\,\frac{1}{N_{\max}}
        \sum_{n=1}^{N_{\max}}
        \cos\left(\frac{n\pi x}{L}\right)
    \right].
    \label{eq:numerical_initial_data}
\end{equation}
Here, \(\eta>0\) is a parameter representing the magnitude of the perturbation,
and we set \(\eta=0.05\) in the numerical simulations presented in this study.

The number of spatial cells \(N_X\) is taken to be even, and the maximum mode number is set to
$    N_{\max}=\frac{N_X}{2}-1
$.
The cosine modes representable on a grid of \(N_X\) points are
\(n=1,\ldots,N_X/2\), and the highest mode \(n=N_X/2\) is an alternating mode
with wavelength \(2\Delta x\), whose sign changes between adjacent grid points.
Because this mode is susceptible to numerical high-frequency noise and
aliasing errors, it is excluded from the initial perturbation.

Thus, the initial condition \eqref{eq:numerical_initial_data} contains all resolvable
cosine modes, from low to high frequencies, with equal amplitude coefficients.
Rather than artificially selecting a particular mode, this initial condition is used
to determine whether the mode with the largest growth rate predicted by linear stability
analysis naturally becomes dominant during the early stage of the time evolution.
Moreover, because no random numbers are used, variations arising from differences in
the initial perturbation can be eliminated when comparing different parameter values.

\subsection{Context of the present numerical scheme}
\label{app_subsec:numerical_position}

The numerical method described above applies Murakawa's linear scheme \cite{MurakawaEfficientLS} to
the nonlocal cell adhesion equation \eqref{eq:app_pde_beta_form}.
Its main features can be summarized as follows.
First, because the nonlinear diffusion is treated linearly and implicitly in terms of the auxiliary variable \(w\approx\beta(u)\),
the severe explicit time-step restriction arising from diffusion can be avoided.
Second, each time step requires only a single solution of a cyclic tridiagonal linear system, making the scheme straightforward to implement.
Third, because the scheme is based on a conservative finite-volume discretization, the discrete mass is conserved by construction.
Fourth, by locally choosing the stabilization parameter as \(\mu^n\approx1/\beta'(u^n)\),
the nonlinear diffusion can be approximated efficiently while retaining linearity.
This method enables the formation of sharp mesa-type patterns for large \(m\) to be tracked stably and efficiently.

\section{Construction of the steady-state bifurcation diagram}
\label{app:numerical_bifurcation}

Here, we describe the construction of the steady-state bifurcation diagram shown in Fig.~\ref{fig:tent_system_bifurcation}. The objective is not to perform direct time-dependent simulations, but rather to numerically continue the nonhomogeneous steady-state branch bifurcating from the homogeneous steady state and to verify the bifurcation type determined by weakly nonlinear analysis. In what follows, we set \(m=1\), \(\bar u=1/2\), and \(R=1\), use the tent kernel, and consider the case in which the fundamental mode \(k_c=\pi/L\) is the critical mode.

\subsection{Formulation of the steady-state problem}
\label{app_subsec:bif_formulation}

It follows from the gradient-flow formulation \eqref{eq:gradient_flow_form} that, for any smooth periodic steady-state solution satisfying \(0<u<1\), the chemical potential
\[
    \mu(u,a)=-\log(1-u^m)-aK*u
\]
is spatially constant. Therefore, in the steady-state bifurcation analysis, rather than considering the time-evolution residual, we solve the condition
\begin{equation}
    \mu(u,a)-\frac{1}{|\Omega|}\int_\Omega\mu(u,a)\,dx=0
    \label{eq:app_bif_steady}
\end{equation}
that eliminates the nonconstant component of the chemical potential.

\subsection{Even-symmetric cosine expansion and Galerkin residual}
\label{app_subsec:bif_galerkin}

To track the even-symmetric solution branch bifurcating from the fundamental mode, we approximate the solution by the cosine expansion
\begin{equation}
    u(x)=\bar u+\sum_{n=1}^{N_c}b_n\cos\!\left(\frac{n\pi x}{L}\right).
    \label{eq:app_cosine_expansion}
\end{equation}
This representation automatically fixes the mean density \(\bar u\), while restriction to the even-symmetric subspace eliminates the need for a phase condition associated with periodic translational symmetry.

We evaluate \eqref{eq:app_cosine_expansion} at uniformly spaced grid points \(x_j\) (\(j=0,\ldots,N_x-1\)) and compute the convolution \(K*u\) using the FFT. For continuation of the steady-state branch, we employ a cosine collocation method suited to the treatment of smooth solution branches by Newton's method; this differs from the piecewise-constant finite-volume discretization used for the time evolution in \ref{app:numerical_scheme}. The expansion coefficients \(b\) are the unknowns, and the grid points serve as the collocation points. Rather than sampling the kernel in physical space, we use the analytical Fourier response of the tent kernel, \(\widehat K(k)=2(1-\cos(kR))/(kR)^2\) (\(\widehat K(0)=1\)). At each grid point, we compute
\[
    \mu_j=-\log\!\bigl(1-u(x_j)^m\bigr)-a\,(K*u)(x_j),
    \qquad
    r_j=\mu_j-\frac{1}{N_x}\sum_{\ell=0}^{N_x-1}\mu_\ell
\]
and define the residual as the cosine components of \(r\):
\begin{equation}
    F_n(b,a)
    =
    \frac{2}{N_x}\sum_{j=0}^{N_x-1}
    r_j\cos\!\left(\frac{n\pi x_j}{L}\right),
    \qquad
    n=1,\ldots,N_c.
    \label{eq:app_galerkin_residual}
\end{equation}
Steady-state solutions are obtained as coefficient vectors \(b=(b_1,\ldots,b_{N_c})\) satisfying \(F(b,a)=0\). In the numerical computations, we use \(N_x=1024\) as the baseline resolution and set the number of modes to \(N_c=12\) for the supercritical case \(L=1.9\) and \(N_c=16\) for the subcritical case \(L=2.1\). We increase \(N_x\) and \(N_c\) as necessary and verify that the shape of the solution branch remains unchanged.

\subsection{Construction of seed solutions}
\label{app_subsec:bif_seed}

As starting points for continuation, we fix the fundamental-mode coefficient \(b_1=A\) at a small value and construct small-amplitude seed solutions by solving \(F(b,a)=0\) for the remaining unknowns \((a,b_2,\ldots,b_{N_c})\) using Newton's method. As the initial guess for Newton's method, we assign to \(a\) the value \(a_c+\alpha A^2\) obtained from the approximate relation
\begin{equation}
    a-a_c\simeq\alpha A^2,
    \qquad
    \alpha=\frac{c_{\mathrm L}}{4\,G_0\,\widehat K(k_c)\,k_c^2},
    \qquad
    G_0=\bar u(1-\bar u^m)
    \label{eq:app_seed_guess}
\end{equation}
derived from the Stuart--Landau equation, while setting the initial guesses for the higher-order coefficients \(b_2,\ldots,b_{N_c}\) to \(0\). This procedure is based on the weakly nonlinear picture that, near the critical point, the amplitude \(A\) is small, the solution is approximated by the fundamental mode \(b_1\), and \(a\) is displaced from its critical value by \(\alpha A^2\). We generate a sequence of seed solutions by increasing \(A\) incrementally from \(10^{-4}\) to \(5\times10^{-2}\) and using the solution obtained for each \(A\), consisting of \(a\) and \(b_2,\ldots,b_{N_c}\), as the initial guess for Newton's method at the next value of \(A\).

\subsection{Continuation procedure}
\label{app_subsec:bif_continuation}

Starting from the seed solutions obtained above, we track the solution branch in the unknowns \((b,a)\). In the supercritical case (\(L=1.9\)), we first attempt standard pseudo-arclength continuation (\texttt{PALC} in \texttt{BifurcationKit.jl}) using the adhesion strength \(a\) as the continuation parameter. Because the solution branch is initialized in the immediate vicinity of the pitchfork bifurcation point, the bordered linear system that directly uses \(a\) as the continuation parameter may become singular. In that event, we switch to a formulation of the same steady-state residual in which the amplitude \(A=b_1\) is treated as the parameter and \(a\) as an unknown. If both approaches fail, we use direct pseudo-arclength continuation of the same residual as a final fallback.

In the subcritical case (\(L=2.1\)), the branch undergoes a fold and turns from the \(a<a_c\) side toward the large-amplitude regime; hence, we employ pseudo-arclength continuation in the unknowns \(z=(b_1,\ldots,b_{N_c},a)\) from the outset. More precisely, an arclength condition based on the tangent direction of the known branch is appended to \(F(b,a)=0\), allowing the branch to be tracked through the fold. The arclength step is gradually increased after each accepted step; if the residual exceeds the prescribed tolerance or the solution violates the physical bounds, the step is reduced and the computation is repeated. Fold candidates are detected as points at which the discrete trend of \(a\) in the continuation data reverses direction.

\subsection{Stability classification}
\label{app_subsec:bif_stability}

The stability of the homogeneous branch is determined from the linear dispersion relation: it is stable for \(a<a_c\) and unstable for \(a>a_c\) with respect to perturbations at fixed mean density.
The stability of the nonhomogeneous branch is classified by linearizing the reduced system
\[
    \frac{db}{dt}=G(b,a)
\]
obtained by projecting the time-evolution equation \(\partial_t u=\partial_x(g(u)\partial_x\mu)\) onto the space of even-symmetric cosine coefficients. We compute the Jacobian \(D_bG(b,a)\) by finite differences and determine stability from the sign of the largest real part of its eigenvalues. A negative largest real part indicates stability, whereas a positive value indicates instability. Points at which the absolute value of the largest real part does not exceed the threshold \(10^{-7}\) are treated as neutral and excluded from the plotted stable and unstable branches. Thus, the solid and dashed portions of the nonhomogeneous branches in Fig.~\ref{fig:tent_system_bifurcation} indicate stability and instability, respectively, within this even-symmetric subspace.

\subsection{Plotting conventions}
\label{app_subsec:bif_plot}

In the bifurcation diagrams, the horizontal axis represents the adhesion strength \(a\), and the vertical axis represents the magnitude of nonhomogeneity, \(\max_x u(x)-\min_x u(x)\). The homogeneous branch is represented by the horizontal line at ordinate \(0\), with stable branches drawn as solid curves and unstable branches as dashed curves. In the subcritical case, the fold point at which the branch turns is indicated by an open circle.

To plot only physically meaningful points, we retain only bifurcation points satisfying \(\min_x u\ge-10^{-10}\), \(\max_x u\le1-10^{-4}\), and \(\max_x u-\min_x u\le1+10^{-8}\). The margin \(10^{-4}\) below the upper bound is introduced to avoid the logarithmic singularity at \(u=1\). The numerical solution itself is not clipped, and candidates that violate the physical bounds are not accepted as bifurcation points.

\subsection{AI-assisted code development}
Codex (OpenAI) was used to assist in generating the computational code for the numerical continuation and stability calculations underlying the bifurcation diagrams in Figure 4. The mathematical formulation and numerical methodology were specified by the authors. The AI-assisted code was subsequently reviewed, modified, tested, and validated by the authors.

\section{Proofs of the Mesa-Limit Results}
\label{app:mesa_limit_proofs}

 In this appendix, we prove the results on the mesa limit stated in Section~\ref{sec:mesa_limit}.

In what follows, to make the dependence on \(m\) explicit, we write \(F_m\) and \(E_m\), respectively, for the quantities already defined in \eqref{eq:F_def} and \eqref{eq:energy}.

\subsection{Limit of the Local Entropy Term}
\label{app_subsec:local_entropy_limit}

We first establish a basic estimate for the local entropy term.
It follows from \eqref{eq:F_def} that, for \(0\le u\le1\),
\[
    0\le F_m(u)\le F_m(1).
\]
Moreover, for
$
    F_m(1)
    =
    \int_0^1-\log(1-s^m)\,ds
$
the series expansion
$
    -\log(1-s^m)
    =
    \sum_{\ell=1}^{\infty}\frac{s^{m\ell}}{\ell}
$
gives
\[
    F_m(1)
    =
    \sum_{\ell=1}^{\infty}
    \frac{1}{\ell(m\ell+1)}
    \le
    \frac{\pi^2}{6m}.
\]
Therefore,
\begin{equation}\label{eq:local_entropy_uniform}
    0\le F_m(u)
    \le
    F_m(1)
    \to 0
    \qquad
    (m\to\infty).
\end{equation}
This estimate yields \eqref{eq:vanishing_entropy_term}.

\subsection{Proof of \(\Gamma\)-Convergence}
\label{app_subsec:gamma_convergence_proof}

We prove Theorem~\ref{thm:mesa_energy_limit}. We first recall the definition of $\Gamma$-convergence.
\begin{definition}
The functional $E_m:L^\infty(\Omega) \cap \mathcal{P}_M(\Omega) \to \mathbb{R}\cup{\{+\infty\}}$ is said to $\Gamma$-converge to the functional $E_{\infty}:L^\infty(\Omega) \cap \mathcal{P}_M(\Omega) \to \mathbb{R}\cup\{+\infty\}$ with respect to the weak-\(*\) \(L^\infty(\Omega)\) topology if the following conditions hold.
\begin{itemize}
    \item ($\Gamma$-liminf inequality) For any $u \in L^\infty(\Omega) \cap \mathcal{P}_M(\Omega)$ and any sequence $u_m$ converging to $u$ with respect to the weak-\(*\) \(L^\infty(\Omega)\) topology, the following inequality holds:
\begin{equation*}
    E_{\infty}[u]\le \liminf_{m \to \infty} E_{m}[u_m].
\end{equation*}
\item (Existence of a recovery sequence) For any $u \in L^\infty(\Omega) \cap \mathcal{P}_M(\Omega)$, there exists a sequence $u_m$ converging to $u$ with respect to the weak-\(*\) \(L^\infty(\Omega)\) topology such that
\begin{equation*}
    E_{\infty}[u]=\lim_{m \to \infty} E_{m}[u_m].
\end{equation*}
\end{itemize}
\end{definition}

The $\Gamma$-convergence of the energy functionals $E_m$ can be established by verifying the preceding definition directly.
\begin{proof}[Proof of Theorem~\ref{thm:mesa_energy_limit}]
The estimate \eqref{eq:uniform_energy_bound} follows immediately from \eqref{eq:vanishing_entropy_term}.

We now verify the \(\Gamma\)-convergence using \eqref{eq:uniform_energy_bound}.
The set \(\mathcal A_M\) is a bounded weak-\(*\) closed subset of \(L^\infty(\Omega)\). Since \(L^1(\Omega)\) is separable, the Banach--Alaoglu theorem implies that \(\mathcal A_M\) is sequentially compact in the weak-\(*\) \(L^\infty(\Omega)\) topology.
Moreover, if \(u_n\rightharpoonup^\ast u\) in \(L^\infty(\Omega)\) as $n \to \infty$, then, in particular, \(u_n\rightharpoonup u\) weakly in \(L^2(\Omega)\).

Under the assumptions, the convolution operator \(T_K:u\mapsto K*u\) is compact on \(L^2(\Omega)\).
Indeed, since $T_K u \in L^2(\Omega)$, it admits a Fourier series expansion.
Finite sums then yield compact approximating operators, which converge to $T_K$ in the operator norm by Plancherel's theorem and the Riemann--Lebesgue lemma.
Since the space of compact operators is closed in the operator norm, it follows that $T_K$ itself is compact.
Consequently, if \(u_n\rightharpoonup u\) weakly in \(L^2(\Omega)\) as $n \to \infty$, then
\[
    T_Ku_n\to T_Ku
    \qquad
    \text{strongly in }L^2(\Omega).
\]
It follows that the quadratic form
$
    Q[u]:=\int_\Omega u(K*u)\,dx
    =:
    \langle u,T_Ku\rangle
$
is continuous with respect to weak convergence in \(L^2(\Omega)\). Indeed,
\[
    Q[u_n]-Q[u]
    =
    \langle u_n-u,T_Ku\rangle
    +
    \langle u_n,T_K(u_n-u)\rangle.
\]
The first term converges to \(0\) because \(u_n-u\rightharpoonup0\) and \(T_Ku\in L^2(\Omega)\).
The second term converges to \(0\) because \(T_K(u_n-u)\to0\) strongly in \(L^2\) and \(\{u_n\}\) is bounded in \(L^2\).
Therefore,
\begin{equation}
    Q[u_n]\to Q[u]\qquad \text{as $n \to \infty$}.
    \label{eq:interaction_weak_continuity}
\end{equation}

We now prove the \(\Gamma\)-liminf inequality using \eqref{eq:interaction_weak_continuity} and \eqref{eq:uniform_energy_bound}.
Consider a sequence satisfying
\[
    u_m\rightharpoonup^\ast u
    \qquad
    \text{in }L^\infty(\Omega).
\]
First, suppose that
$
    \liminf_{m\to\infty} E_m[u_m]=+\infty.
$
In this case, the \(\Gamma\)-liminf inequality is immediate.
We therefore consider the alternative
$
    \liminf_{m\to\infty} E_m[u_m]<+\infty.
$
Choose a subsequence \(u_m\) along which \(\liminf_{m\to\infty} E_m[u_m]\) is attained, not relabeled.
We may then assume that \(u_m\in\mathcal A_M\).
Moreover, the weak-\(*\) closedness of \(\mathcal A_M\) implies that \(u\in\mathcal A_M\).
By \eqref{eq:uniform_energy_bound},
\[
    \liminf_{m\to\infty} E_m[u_m]
    \ge
    \liminf_{m\to\infty}
    \left(
        -\frac a2
        \int_\Omega u_m(K*u_m)\,dx
    \right).
\]
By \eqref{eq:interaction_weak_continuity}, the right-hand side equals
$
    -\frac a2
    \int_\Omega u(K*u)\,dx
    =
    E_\infty[u]
$, and hence
\begin{equation}
    \liminf_{m\to\infty}E_m[u_m]
    \ge
    E_\infty[u]
    \label{eq:gamma_liminf}
\end{equation}
follows.

We next construct a recovery sequence.
Let \(u\in\mathcal A_M\). It suffices to take \(u_m=u\).
Then \eqref{eq:uniform_energy_bound} gives
\[
    0\le E_m[u_m]-E_\infty[u]\to0
    \quad\text{as }m\to\infty.
\]
Now let $u \notin \mathcal{A}_{M}$.
Since $E_{m}[u]=\infty$ in this case, the constant sequence $u_m=u$ is clearly a recovery sequence.

We conclude that \(E_m\) \(\Gamma\)-converges to \(E_\infty\).
This completes the proof of Theorem~\ref{thm:mesa_energy_limit}.
\end{proof}

\subsection{Existence and convergence of minimizers of $E_m$}
\label{app_subsec:minimizer_convergence_proof}
We prove Proposition~\ref{prop:minimizer_for_m} by the direct method.
\begin{proof}[Proof of Proposition~\ref{prop:minimizer_for_m}]
Fix $m\ge 1$. First, since $0<M\le |\Omega|$, the function
$u\equiv M/|\Omega|$ belongs to $\mathcal A_M$, and hence $\mathcal A_M$ is nonempty.
Moreover, the Banach--Alaoglu theorem implies that $\mathcal A_M$ is sequentially compact in the weak-\(*\) \(L^\infty(\Omega)\) topology.

We first verify that $E_m$ is bounded from below on $\mathcal A_M$.
For $u\in \mathcal A_M$, we have $F_m(u)\ge 0$. Moreover, Young's inequality applied to
$
Q[w]=\int_\Omega w(K*w)\,dx
$
gives
\[
\begin{aligned}
|Q[u]|
&=
\left|\int_\Omega u(K*u)\,dx\right| \\
&\le
\|u\|_{L^2}\|K*u\|_{L^2} \\
&\le
\|K\|_{L^1}\|u\|_{L^2}^2
\le \|K\|_{L^1}M
\end{aligned}
\]
and therefore
\[
E_m[u]
=
\int_\Omega F_m(u)\,dx-\frac a2 Q[u]
\ge
-\frac a2\|K\|_{L^1}M.
\]

We next establish the sequential weak lower semicontinuity of $E_m$ on $L^2(\Omega)$.
We begin with the local term. The function $F_m$ is convex and lower semicontinuous on $[0,1]$.
Since it is defined to be $+\infty$ outside $[0,1]$,
$F_m$ is convex and lower semicontinuous on $\mathbb R$.
The functional $\int_{\Omega} F_m(u)\,dx$ therefore admits a representation in terms of the Legendre conjugate on $L^2(\Omega)$ \cite[Corollary]{rockafellar1968},
which shows that the local term is lower semicontinuous with respect to weak convergence in $L^2(\Omega)$.
For the nonlocal term
$
Q[w]
$
we note that, since $K\in L^1(\Omega)$, the periodic convolution operator
$
T_Kw=K*w
$
is compact on $L^2(\Omega)$.
Consequently, if $v_n\rightharpoonup v$ weakly in $L^2(\Omega)$, then
$
Q[v_n]\to Q[v].
$
Thus, $E_m$ is weakly lower semicontinuous.

Let $\{u_n\}\subset \mathcal A_M$ be a minimizing sequence for $E_m$; that is,
$
\lim_{n \to \infty} E_m[u_n]= \inf_{u \in \mathcal A_{M}} E_m[u]
$
holds.
By the sequential compactness of $\mathcal A_M$ in the weak-\(*\) \(L^\infty(\Omega)\) topology, after passing to a subsequence, we may assume that
$
u_n\stackrel{*}{\rightharpoonup}u
\quad\text{in }L^\infty(\Omega).
$
Since $\Omega$ has finite measure, $L^2(\Omega)\subset L^1(\Omega)$.
It follows that
$
u_n\rightharpoonup u
\quad\text{weakly in }L^2(\Omega).
$
Moreover, $u \in \mathcal A_{M}$.
The weak lower semicontinuity of $E_{m}$ therefore yields
\[
\begin{aligned}
E_m[u]
&\le
\liminf_{n\to\infty}E_m[u_n]
=
\inf_{u \in \mathcal A_{M}}E_m[u].
\end{aligned}
\]
Hence, $u$ is a minimizer of $E_m$ over $\mathcal A_M$.
\end{proof}
We prove Corollary~\ref{cor:mesa_minimizer_convergence} directly from the definition.
\begin{proof}[Proof of Corollary~\ref{cor:mesa_minimizer_convergence}]
For each \(m\), let \(u_m\in\mathcal A_M\) be a minimizer of \(E_m\).
Since \(\mathcal A_M\) is sequentially compact in the weak-\(*\) \(L^\infty(\Omega)\) topology, there exist a subsequence \(m_j\to\infty\) and
\(u_\infty\in\mathcal A_M\) such that
\[
    u_{m_j}\rightharpoonup^\ast u_\infty
    \quad
    \text{in } L^\infty(\Omega)
    \quad \text{as } j \to \infty.
\]

For arbitrary \(v\in\mathcal A_M\), let $v_m$ be a recovery sequence.
By the minimality of \(u_m\),
\[
    E_m[u_m]\le E_m[v_m].
\]
Moreover, the defining property of a recovery sequence gives
\[
    E_m[v_m]\to E_\infty[v].
\]
On the other hand, the \(\Gamma\)-liminf inequality yields
\[
    E_\infty[u_\infty]
    \le
    \liminf_{j\to\infty}E_{m_j}[u_{m_j}].
\]
Therefore,
\[
\begin{aligned}
    E_\infty[u_\infty]
    &\le
    \liminf_{j\to\infty}E_{m_j}[u_{m_j}] \\
    &\le
    \limsup_{j\to\infty}E_{m_j}[u_{m_j}] \\
    &\le
    \lim_{j\to\infty}E_{m_j}[v_{m_j}]
    = E_\infty[v].
\end{aligned}
\]
Since \(v\in\mathcal A_M\) was arbitrary, \(u_\infty\) is a minimizer of \(E_\infty\).
This proves Corollary~\ref{cor:mesa_minimizer_convergence}.
\end{proof}

\subsection{Proof of the existence of a mesa-type minimizer}
\label{app_subsec:characteristic_mesa_proof}
To prove Proposition~\ref{prop:characteristic_mesa}, we use the Bauer maximum principle \cite[B.3.4 Corollary]{niculescupersson}.

We first recall the definition of an extreme point.
\begin{definition}
Let $E$ be a real vector space, and let $K \subset E$ be a convex set.
A point $z \in K$ is called an extreme point of $K$ if it satisfies the following condition:
for $x,y \in K$ and $\lambda \in (0,1)$, the identity
\[
  z=(1-\lambda)x+\lambda y
\]
necessarily implies $x=y=z$.
\end{definition}

The Bauer maximum principle states that a convex function attains its maximum over a compact convex set at an extreme point.
\begin{proposition}[Bauer maximum principle]
Let $K$ be a nonempty compact convex subset of a locally convex Hausdorff space.
Let $f:K\to \mathbb{R}$ be an upper semicontinuous convex function.
Then there exists an extreme point $x_0\in \operatorname{ext}K$ such that
$$
f(x_0)=\max_{x\in K} f(x)
$$
holds.
\end{proposition}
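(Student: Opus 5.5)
The plan is to show that the set of maximizers of \(f\) is nonempty and closed, and that it is an extremal subset of \(K\). Then a Zorn's lemma argument of Krein--Milman type shrinks it to a single point, and that point is an extreme point of \(K\).

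\emph{Step 1: existence of a maximum.} Since \(f\) is upper semicontinuous on the compact set \(K\), it is bounded above and attains its supremum \(M:=\sup_K f\). To see this, note that the open sets \(\{f<t\}\), \(t\in\mathbb R\), cover \(K\), so \(f\) is bounded above. If the supremum were not attained, the sets \(\{f<t\}\) with \(t<M\) would cover \(K\), and compactness would give a finite subcover, hence \(f<t_0<M\) on \(K\) for some \(t_0\), which is absurd. Set \(F:=\{x\in K: f(x)=M\}=\{x\in K:f(x)\ge M\}\). Upper semicontinuity makes \(F\) closed in \(K\), hence compact, and \(F\) is nonempty.

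\emph{Step 2: \(F\) is extremal.} Call a nonempty closed subset \(C\subset K\) \emph{extremal} if, whenever \(x,y\in K\), \(\lambda\in(0,1)\) and \((1-\lambda)x+\lambda y\in C\), we have \(x,y\in C\). For \(F\), convexity of \(f\) gives
\[
M=f\bigl((1-\lambda)x+\lambda y\bigr)\le (1-\lambda)f(x)+\lambda f(y)\le M .
\]
Since \(f(x),f(y)\le M\), this forces \(f(x)=f(y)=M\), so \(x,y\in F\).

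\emph{Step 3: Zorn's lemma.} Let \(\mathcal C\) be the family of extremal subsets of \(K\) contained in \(F\), ordered by reverse inclusion. It is nonempty because \(F\in\mathcal C\). For a chain \(\{C_\alpha\}\), the sets \(C_\alpha\) are closed in the compact set \(K\) and have the finite intersection property, so \(\bigcap_\alpha C_\alpha\) is nonempty and closed. It is extremal, since the extremality condition holds in each \(C_\alpha\). Zorn's lemma therefore yields a minimal extremal set \(C_0\subset F\).

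\emph{Step 4: \(C_0\) is a singleton.} This is the main step and the only place where the locally convex Hausdorff hypothesis enters. Suppose \(x\ne y\) both lie in \(C_0\). By the Hahn--Banach separation theorem there is a continuous linear functional \(\ell\) with \(\ell(x)\ne\ell(y)\). Since \(C_0\) is compact, \(\ell\) attains its maximum \(\beta\) on \(C_0\); put \(C_1:=\{z\in C_0:\ell(z)=\beta\}\), which is nonempty and closed.

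The set \(C_1\) is extremal in \(K\). Indeed, if \((1-\lambda)u+\lambda v\in C_1\) with \(u,v\in K\), then \(u,v\in C_0\) by extremality of \(C_0\). Linearity of \(\ell\) together with \(\ell(u),\ell(v)\le\beta\) then forces \(\ell(u)=\ell(v)=\beta\), so \(u,v\in C_1\).

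Because \(\ell(x)\ne\ell(y)\), at least one of \(x,y\) lies outside \(C_1\), so \(C_1\subsetneq C_0\). This contradicts minimality, hence \(C_0=\{x_0\}\).

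\emph{Step 5: conclusion.} An extremal singleton is exactly an extreme point: if \(x_0=(1-\lambda)x+\lambda y\) with \(x,y\in K\), then \(x,y\in\{x_0\}\). Thus \(x_0\in\operatorname{ext}K\), and \(f(x_0)=M=\max_K f\) because \(x_0\in F\).
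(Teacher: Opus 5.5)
Your proof is correct and complete. The maximizer set $F$ is a nonempty closed extremal subset of $K$. Zorn's lemma gives a minimal such set inside $F$. Separating two distinct points by a continuous linear functional then shows that this minimal set is a singleton, and an extremal singleton is an extreme point. I see no gap in any of the five steps.

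The paper gives no proof of this statement. It quotes the Bauer maximum principle from Niculescu--Persson (Corollary B.3.4) and only checks the hypotheses for $\mathcal A_M$ with the weak-$*$ topology of $L^\infty(\Omega)$. So there is no route in the paper to compare against. What you have written is the standard Krein--Milman-type argument on which that citation rests, and it would make the paper's appeal self-contained.

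Your write-up also shows where each hypothesis is used:
\begin{enumerate}
\item Local convexity and the Hausdorff property enter only in Step 4. They guarantee that continuous linear functionals separate points.
\item Steps 1--3 use topological upper semicontinuity and compactness, not their sequential versions.
\end{enumerate}
The second point explains why the paper needs its remark that the weak-$*$ topology on the bounded set $\mathcal A_M$ is metrizable, because $L^1(\Omega)$ is separable. The paper proves the continuity of $Q$ only along sequences, and metrizability is what upgrades this to the genuine upper semicontinuity your Step 1 relies on.
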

We use the Bauer maximum principle to establish the existence of a maximizer that is an extreme point and then show that the extreme-point condition forces this maximizer to be a characteristic function.
\begin{proof}[Proof of Proposition~\ref{prop:characteristic_mesa}]
Minimizing the limiting energy \(E_\infty\) is equivalent to maximizing the nonlocal attraction energy $Q[u]$ over the capacity-constrained set
\(\mathcal A_M\).

We verify that $\mathcal{A}_M$ and the nonlocal interaction energy $u \mapsto \int_{\Omega} u (K\ast u)$ defined on it satisfy the hypotheses of the Bauer maximum principle.
It is known that $L^\infty(\Omega)$, endowed with the weak-$\ast$ topology, is a locally convex Hausdorff space \cite[Comments on Chapter 3]{brezis}.
Moreover, its subset \(\mathcal A_M\) is a convex set that is compact in the weak-\(*\) topology of $L^\infty(\Omega)$.
This follows immediately from the Banach--Alaoglu theorem.
Furthermore, the nonlocal attraction energy is a convex weak-\(*\) continuous functional.
Indeed, by assumption,
$
    \int_\Omega u(K*u)\,dx\ge0
    \quad
    (u\in L^2(\Omega))
$
and hence, for \(0\le\theta\le1\),
\[
\begin{aligned}
& (1-\theta)\int_\Omega u(K*u)\,dx
  +\theta\int_\Omega v(K*v)\,dx \\
&\quad
  -\int_\Omega \bigl((1-\theta)u+\theta v\bigr)
  K*\bigl((1-\theta)u+\theta v\bigr)\,dx \\
&=
\theta(1-\theta)
\int_\Omega (u-v)\bigl(K*(u-v)\bigr)\,dx
\ge0.
\end{aligned}
\]
Moreover, since $K \in L^1(\Omega)$, \eqref{eq:interaction_weak_continuity} shows that this nonlocal attraction energy is continuous with respect to weak-$\ast$ convergence in $L^\infty(\Omega)$. Note here that, because $L^1(\Omega)$ is separable, the weak-$\ast$ topology on the subset $\mathcal A_M$ of $L^\infty(\Omega)$ is metrizable, and therefore sequential upper semicontinuity is equivalent to upper semicontinuity.
The preceding argument and the Bauer maximum principle show that the nonlocal attraction energy attains its maximum at an extreme point.
Consequently, a minimizer of \(E_\infty\) can be chosen to be an extreme point of \(\mathcal A_M\).

It remains to characterize the extreme points of $\mathcal A_M$. We first show that if $u\in \mathcal A_M$ satisfies
$0<u<1$ on a set of positive measure, then $u$ is not an extreme point of
$\mathcal A_M$.
Indeed, suppose that
$
    F:=\{x\in\Omega:0<u(x)<1\}
$
has positive measure. Since
$
    F=\bigcup_{n=2}^{\infty}
    \left\{x\in\Omega:\frac1n<u(x)<1-\frac1n\right\}
$
countable subadditivity implies that, for some $n_{0}\ge 2$,
\[
    B:=\left\{x\in\Omega:\frac{1}{n_{0}}<u(x)<1-\frac{1}{n_{0}}\right\}
\]
has positive measure.

Choose a function $\xi_B\in L^\infty(\Omega)$ supported in $B$ such that
\[
    \int_\Omega \xi_B\,dx=0,\qquad
    \|\xi_B\|_{L^\infty}\le 1
\]
holds. Indeed, since Lebesgue measure is atomless, \cite[Corollary 1.12.10]{bogachev} yields a measurable subset $B_1$ of $B$ such that $|B_1|=\frac{|B|}{2}$.
Thus, setting $B_2:=B\setminus B_1$, it suffices to define $\xi_B$ by
\[
    \xi_B:=\mathbf{1}_{B_1}-\mathbf{1}_{B_2}.
\]

Using this $\xi_B$, define
\[
    u_+ := u+\frac{1}{n_{0}} \xi_B,\qquad
    u_- := u-\frac{1}{n_{0}} \xi_B.
\]
Since $\xi_B$ vanishes outside $B$, while
$\frac{1}{n_{0}}<u<1-\frac{1}{n_{0}}$ and $|\xi_B|\le 1$ on $B$, it follows that
$
    0\le u_\pm\le 1
$
holds. Moreover, since $\int_\Omega \xi_B\,dx=0$,
$
    \int_\Omega u_\pm\,dx=\int_\Omega u\,dx=M
$
and hence $u_\pm\in \mathcal A_M$. Furthermore, since $\xi_B\not\equiv 0$,
$u_+\ne u_-$, and
$
    u=\frac12(u_++u_-)
$
holds. Therefore, $u$ is not an extreme point of $A_M$.

Thus, if $u \in \mathcal{A}_M$ is an extreme point, then $|\{0<u<1\}|=0$, and hence there exists a measurable set $A$ such that $u=\mathbf{1}_{A}$.
The capacity constraint further implies that $|A|=M$.

We conclude that \(E_\infty\) admits a minimizer of the form
$
    u_\infty=\mathbf 1_A,
    \quad
    |A|=M.
$
\end{proof}

\subsection{Proof of the threshold condition}
\label{app_subsec:threshold_condition_proof}

We prove Proposition~\ref{prop:mesa_threshold_condition} using the maximization property of the nonlocal attraction energy.
\begin{proof}[Proof of Proposition~\ref{prop:mesa_threshold_condition}]
Let \(u_\infty\in\mathcal A_M\) be a minimizer of \(E_\infty\).
Then \(u_\infty\) maximizes the nonlocal attraction energy $Q[u]$
over \(\mathcal A_M\).

Set
$
    \psi:=K*u_\infty=T_{K}u_{\infty}
$.
We show that \(u_\infty\) maximizes the linear functional
\[
    v\longmapsto \int_\Omega v\psi\,dx
\]
over \(\mathcal A_M\).
For any \(v\in\mathcal A_M\) and \(0\le t\le1\),
$
    u^t:=(1-t)u_\infty+tv
$
also belongs to \(\mathcal A_M\).
By the maximality of \(u_\infty\),
\[
    \int_\Omega u^t(T_Ku_t)\,dx
    \le
    \int_\Omega u_\infty(T_Ku_\infty)\,dx.
\]
Setting \(w:=v-u_\infty\), we directly compute the difference between the two sides:
\[
\begin{aligned}
0
&\ge
\int_\Omega u^t(T_Ku_t)\,dx
-
\int_\Omega u_\infty(T_Ku_\infty)\,dx  \\
&=
\int_\Omega (u_\infty+tw)T_K(u_\infty+tw)\,dx
-
\int_\Omega u_\infty(T_Ku_\infty)\,dx  \\
&=
t\int_\Omega w(T_Ku_\infty)\,dx
+
t\int_\Omega u_\infty(T_Kw)\,dx
+
t^2\int_\Omega w(T_Kw)\,dx .
\end{aligned}
\]
Since \(K\) is even, \(T_K\) is a symmetric operator, and hence
$
    \int_\Omega u_\infty(T_Kw)\,dx
    =
    \int_\Omega w(T_Ku_\infty)\,dx.
$
Therefore,
\[
0
\ge
2t\int_\Omega w(T_Ku_\infty)\,dx
+
t^2\int_\Omega w(T_Kw)\,dx .
\]
Using \(\psi=T_Ku_\infty\) and \(w=v-u_\infty\), we obtain
\[
0
\ge
2t\int_\Omega (v-u_\infty)\psi\,dx
+
t^2\int_\Omega (v-u_\infty)T_K(v-u_\infty)\,dx .
\]
Dividing by \(t>0\) and then letting \(t \to 0\), we obtain
\begin{equation}
    \int_\Omega v\psi\,dx
    \le
    \int_\Omega u_\infty\psi\,dx
    \qquad
    (v\in\mathcal A_M)
    \label{eq:threshold_linear_max}
\end{equation}

We now use \eqref{eq:threshold_linear_max} to show that
$
    \operatorname*{ess\,sup}_{\{0\le u_\infty<1\}}\psi
    \le
    \operatorname*{ess\,inf}_{\{0<u_\infty\le 1\}}\psi.
$
We argue by contradiction.
Suppose that
$
    \operatorname*{ess\,sup}_{\{0 \le u_\infty<1\}}\psi
    >
    \operatorname*{ess\,inf}_{\{0<u_\infty \le 1\}}\psi.
$
Then there exist numbers \(\alpha_1<\alpha_2\) and disjoint measurable sets \(E,F\) of positive measure such that
\[
    E\subset \{0 \le u_\infty<1\},
    \qquad
    F\subset \{0<u_\infty\le 1\},
\]
\[
    \psi>\alpha_2
    \quad \text{on } E,
    \qquad
    \psi<\alpha_1
    \quad \text{on } F.
\]
By passing to subsets if necessary, we may further assume that, for some \(\delta>0\),
$
    E\subset \{0 \le u_\infty\le1-\delta\}
$,
$
    F\subset \{\delta \le u_\infty \le 1\}
$
and \(|E|=|F|>0\).
Using these sets $E,F$, define
$
    v
    =
    u_\infty
    +
    \delta(\mathbf 1_E-\mathbf 1_F).
$
Then
$
    0\le v\le1,
    \quad
    \int_\Omega v\,dx=M
$
and hence \(v\in\mathcal A_M\). On the other hand,
\[
    \int_\Omega (v-u_\infty)\psi\,dx
    =
    \delta
    \left(
        \int_E\psi\,dx
        -
        \int_F\psi\,dx
    \right)
    >
    \delta(\alpha_2-\alpha_1)|E|
    >
    0.
\]
This contradicts \eqref{eq:threshold_linear_max}.

Therefore,
$
    \operatorname*{ess\,sup}_{\{0\le u_\infty<1\}}\psi
    \le
    \operatorname*{ess\,inf}_{\{0<u_\infty\le 1\}}\psi.
$
Choosing a constant \(\lambda\) between these two values, we obtain
\[
    \psi\le\lambda
    \quad
    \text{a.e. on } \{0 \le u_\infty<1\},
\]
\[
    \psi\ge\lambda
    \quad
    \text{a.e. on } \{0<u_\infty \le 1\}.
\]
In particular,
\[
    K*u_\infty\ge\lambda
    \quad
    \text{a.e. on } \{u_\infty=1\},
\]
\[
    K*u_\infty\le\lambda
    \quad
    \text{a.e. on } \{u_\infty=0\}.
\]
Moreover, in the region where \(0<u_\infty<1\), both inequalities hold simultaneously, and therefore
\[
    K*u_\infty=\lambda.
\]

This proves Proposition~\ref{prop:mesa_threshold_condition}.
\end{proof}

\end{document}